\documentclass{amsart}
\usepackage{graphicx}
\graphicspath{ {./images/} }

\usepackage{url}
\usepackage{hyperref}
\usepackage{float}
\usepackage{multicol}

\usepackage{amsaddr, amsmath, amsthm, amssymb}

\newtheorem{proposition}{Proposition}

\def\ackname{Acknowledgements}%

\def\acknowledgement{\par\addvspace{17pt}\small\rmfamily
\trivlist\if!\ackname!\item[]\else
\item[\hskip\labelsep
{\bfseries\ackname}]\fi}

\newenvironment{acknowledgements}{\begin{acknowledgement}}
{\end{acknowledgement}}

\def\contribution{\par\addvspace{17pt}\small\rmfamily
\trivlist\if!\ackname!\item[]\else
\item[\hskip\labelsep
{\bfseries{Author contributions}}]\fi}

\newenvironment{contributions}{\begin{contribution}}
{\end{contribution}}

\title[Effectiveness of isolation measures to contain epidemics]{Effectiveness of isolation measures with app support to contain COVID-19 epidemics: a parametric approach}

\thanks{The Version of Record of this preprint is published in the \emph{Journal of Mathematical Biology}, and is available online at \url{https://doi.org/10.1007/s00285-021-01660-9}.}

\author[Andrea Maiorana, Marco Meneghelli, Mario Resnati]{Andrea Maiorana, Marco Meneghelli, Mario Resnati} 
\address{Bending Spoons, corso Como 15, Milan, Italy }

\begin{document}

\begin{abstract}
In this study, we analyze the effectiveness of measures aimed at finding and isolating infected individuals to contain epidemics like COVID-19, as the suppression induced over the effective reproduction number. We develop a mathematical model to compute the relative suppression of the effective reproduction number of an epidemic that such measures produce. This outcome is expressed as a function of a small set of parameters that describe the main features of the epidemic and summarize the effectiveness of the isolation measures. In particular, we focus on the impact when a fraction of the population uses a mobile application for epidemic control. Finally, we apply the model to COVID-19, providing several computations as examples, and a link to a public repository to run custom calculations. These computations display in a quantitative manner the importance of recognizing infected individuals from symptoms and contact-tracing information, and isolating them as early as possible. The computations also assess the impact of each variable on the suppression of the epidemic.

\vspace{0.2cm}

\textbf{Keywords:} Contact tracing, COVID-19, Epidemic models
\end{abstract}

\maketitle

\section{Introduction}

\subsection{Main concepts and goals}

This study aims to develop a probabilistic model to predict the effectiveness of containing an epidemic such as COVID-19 with measures aimed at finding and isolating infected individuals. More precisely, we are interested in modeling such ``isolation measures,'' by which we mean finding and isolating infected people via their symptoms and contact tracing, to predict the impact of these measures on the effective reproduction number of the epidemic. Special attention is dedicated to the case in which contact tracing is achieved, for a part of the population, through a mobile application.

Studies such as \cite{FerrEtAl2020Quant} have underlined the role of asymptomatic and presymptomatic transmission in the COVID-19 outbreak, and the consequent importance of using a mobile application for efficient contact tracing. This insight has also led to the development of models to quantitatively assess the impact of a contact tracing app on the epidemic, primarily through agent-based approaches like in \cite{OxfordRepo}.

In this paper, we propose an analytical approach to answer the following questions: How is the effective number $R_t$ of an epidemic impacted when isolation measures are in place versus when they are not, and what are the main factors contributing to the suppression of $R_t$? We take the effective reproduction number \emph{in the absence} of isolation measures, denoted by $R^0_t$, as an input of our model, which is thus independent of any underlying epidemic model. Moreover, our approach is \emph{parametric} in that we concentrate the quantitative description of the isolation measures into relatively few, comprehensible parameters that comprise the input of the model. These parameters include the share of the population using an app, the share of people who self-isolate upon testing positive, and more.

Previous studies concerning the impact on the epidemic of isolating infected individuals include \cite{muller2000contact}, which proposes a generative stochastic model of SIR-type, and \cite{fraser2004factors}, which uses an analytical method more similar to our own.
\bigbreak

The starting point of our analysis is the \emph{effective reproduction number} $R^0_t$ \emph{in the absence of isolation measures},\footnote{$R^0_t$ must not be confused with the \emph{basic reproduction number} $R_0$.} that we consider as given. When speaking about modeling ``isolation measures,'' we refer to policies focused on finding and selectively isolating infected individuals. We do not refer to generalized actions like imposing a lockdown, whose impact on the epidemic is considered already known and encompassed in $R^0_t$.

$R^0_t$ is defined, for any absolute time $t$, as the expected number of cases generated by a random individual who was infected at time $t$ during their lifetime. This quantity can be written as an integral
$$R^0_t=\int_{[0,+\infty)}\beta^0_t(\tau)\mathrm{d}\tau\,,$$
where $\beta^0_t$ is the \emph{infectiousness} (also called \emph{effective contact rate}): $\beta^0_t$ is a function describing the expected number of cases generated by an individual infected at time $t$, per unit of \emph{infectious age}, that is the period of time (measured in days) elapsed from the time of infection of the individual. So, for example, the number
$$\int_{[1,3)}\beta^0_t(\tau)\mathrm{d}\tau$$
is the expected number of people infected between 24 and 72 hours from the infector's moment of infection.
Note that the normalization $\beta^0_t/R^0_t$ is the PDF of the \emph{generation time}, the time taken by an individual infected at $t$ to infect a different individual.\footnote{This is better explained in \S\ref{A: generation time}.}
\bigbreak
 
In this study, we set up a methodology and a model to analyze changes in the reproduction number when the population is subject to isolation measures, including the support of an app for individuals who have tested positive, and depending on some parameters of simple interpretation. We denote by
$$R_t=\int_{[0,+\infty)}\beta_t(\tau)\mathrm{d}\tau$$
the effective reproduction number \emph{in presence of isolation measures}, and we compute $R_t$ as a function of $R^0_t$,\footnote{We stress that the time evolution of $R^0_t$, describing how the epidemics would have evolved without the measures we are modeling, is taken as known\textendash our goal is to study the relative impact of the measures. In particular, we do not take into account possible second-order effects on $R^0_t$, such as general changes in the behavior of the population, that may come as a consequence of the measures and their impact.} other epidemiological data such as the symptom onset distribution, and some parameters describing the isolation measures, such as the probability that an infected, symptomatic individual gets a test, or the probability that a recipient of the infection gets notified when their infector receives a positive test. We only model how isolation measures work and how they affect the epidemic, without assuming anything about how the epidemic itself develops. In particular, our model is agnostic of any particular form for $\beta^0_t$ and $R^0_t$.

The final goal of the model we propose is to understand the most important leverages that may facilitate optimization to better direct efforts of decision-makers, scientists, and developers. Such factors include app efficiencies, timeliness of notifications, app adoption in the population, and others.

\subsection{The assumptions of the model and outline of the paper}\label{The assumptions of the model}

The model developed in \S\ref{Mathematical model} is the translation into mathematical terms of the following assumptions, that describe an idealized schema in which infected individuals acknowledge their illness and take measures to avoid infecting others.

\begin{itemize}
\item An infected individual who shows symptoms is immediately\footnote{This and the following assumption of immediate notification simplify the treatment to the extent that they avoid adding further distributions modeling some delays. In fact, they are not essential hypotheses, and such real-world delays could be also taken into account in the current setting, by including them in the distribution of the time $\Delta^{\text{A}\to\text{T}}$ between notification and test, introduced below.} notified that they should take a test (which does not discount the possibility that they acknowledge this necessity independent of an external input). This process does not always necessarily occur, but does so with a probability $s^\text{s}$.
\item Given a infector-infectee pair, when the infector tests positive after the contagion, the infectee is immediately notified to take a test, with probability $s^\text{c}$.
\item In either scenario, after an infected individual is notified to take a test, they take a test which will return a positive result after a time from the notification, which is distributed according to a given distribution $\Delta^{\text{A}\to\text{T}}$ (possibly reaching $+\infty$ to account for the case in which the individual is never tested or never receives the positive outcome).
\item Immediately upon receiving the positive outcome of the test, an average infected individual will self-isolate with probability $\xi$. Put differently, the number of individuals they infect from this moment is reduced by a factor $1-\xi$ compared to the scenario in which they do not take any isolation measures.
\end{itemize}

The equations derived from these hypotheses produce an algorithm that computes the time evolution of the key quantities. This is summarized in \S\ref{Summary of eqns homog model}.

Note that in our model we are only considering \emph{forward} contact tracing, i.e.,\ infectees are notified of the positive result of their infectors, but not vice-versa. Doing otherwise would significantly complicate the discussion. This is probably the main limitation of the model, which may thus underestimate the effectiveness of the isolation measures: While backward contact tracing is in general less effective when timeliness in isolating infected individuals is key, it must be noted that its effect may be significant for epidemics for which super-spreaders, i.e.\ individuals that infect a large number of people, have a major impact on the contagion. Such individuals may be identified more easily thanks to backward tracing. A treatment of backward tracing in the context of a generative model is covered in \cite{muller2000contact}, \S3.1.
 
Subsequently, in \S\ref{Modifications with app} we consider a more complex model. Instead, we assume that the population is split into two groups, depending on whether or not they use a mobile application for epidemic control. The parameters $s^\text{s}$ and $s^\text{c}$ are different, depending on whether they refer to individuals who use the app.

Finally, in \S\ref{Calculations}, we apply these models by computing the suppression of $R_t$ for specific choices of the input parameters, particularly to assess the importance of such parameters. As for the input parameters that describe the epidemic, we use data relative to COVID-19. All these data are taken for a single source, \cite{FerrEtAl2020Quant}. It should be noted that these quantities are still preliminary, have quite large uncertainties, and are not necessarily the most up-to-date. However, we stress that these data are only used as inputs in all our computations, which can be easily reproduced and extended by using the code available in the open repository \cite{mm2021epidemics}. It would be immediate to redo the computations with different inputs, to reflect any new understandings the scientific community should gain on COVID-19. In addition to this, in \S\ref{Dependency on epidemic data} we briefly check the robustness of our results with respect to changes in some epidemic data, namely the share of infected individuals that are asymptomatic, the contribution of those individuals to the reproduction number, and the generation time distribution.

The paper includes an Appendix where the main steps of the mathematical model are proven rigorously, in a framework where the hypotheses can be formulated precisely using the language of probability theory.

\subsection{Discussion of the results}

Summing up, this paper introduces a model of targeted isolation measures\textendash with special attention paid to those based on contact tracing\textendash in the context of an epidemic with given dynamics. It studies the impact of these, measured as the change in the key indicators of the epidemics (first of all, the reproduction number) with respect to the situation without measures. It presents a methodology to turn the assumptions defining the model into mathematical equations, without assuming an underlying model explaining the time evolution of the epidemic. In particular, the formalism developed in the Appendix allows a careful and exact development of the theory, in which all the interdependencies of the involved quantities are clarified. We end up with with a set of equations that express the relevant quantities in terms of those relative to previous times, giving a deterministic time evolution.

These equations (summarized in \S\ref{Summary of eqns homog model} for the ``homogeneous'' setting) are quite complex, reflecting the non-triviality of the assumptions about how isolation measures work. This makes it hard to analyze them analytically, for example, to study the asymptotic behaviour of the solutions, as was done in \cite{fraser2004factors}. On the other hand, our treatment allows us to refrain from making strong and unrealistic independence assumptions about the involved quantities, and leaves us greater freedom in setting up the hypotheses of how contact tracing works (for example, the isolation of contact-traced individuals is not assumed to be certain, nor immediate). And, notably, it allows us to numerically compute, with arbitrary precision, the time evolution of the reproduction number $R_t$ (and, hence, of the epidemic size) starting from the ``default'' reproduction number $R^0_t$, other epidemiological data, and the parameters introduced in \S\ref{The assumptions of the model} describing the isolation measures.

We stress that, despite our extensive use of the language of probability theory, our model of the isolation measures is deterministic: It works as if the full history of the epidemic, with or without isolation measures, is given, and uses some parameters describing the mean efficacy of the isolation measures on the population. It then expresses $R_t$ in terms of $R^0_t$ and these parameters.

A limitation to the model comes from such homogeneous-mixing hypotheses regarding contact tracing and isolation policies: The only heterogeneity taken into account is the separation between individuals who do or do not use an app in \S\ref{Modifications with app}. For example, the fact that, in reality, individuals belonging to the same household are more easily traced (in addition to being more easily infected by each other) is not taken into account. Besides the absence of backward contact tracing, mentioned in \S\ref{The assumptions of the model}, other limitations may be attributed to the specific form of the hypotheses. However, many changes to the assumptions could be taken into account within the same mathematical framework: Features such as a different delay in testing for symptomatic or contact-traced individuals, or the existence of a targeted quarantine for potential infected individuals (even before they get tested) could be modeled without adding conceptual complications.

By using the model in \S\ref{Calculations} to compute the suppression of $R_t$, we can recognize how isolation measures, particularly app-mediated isolation measures, can play an important role in suppressing epidemics like COVID-19. However, our results show how the impact of such measures is strongly sensitive to parameters describing their efficiency and timeliness: For example, the reduction in $R_t$ quickly becomes insignificant as the time taken to get a positive test result (and then to start isolating) grows past a few days (see Fig.\ \ref{fig:timeliness_span}).

The computations relative to the case in which an app is used show the importance of having an app which is effective at spotting infections, maximizing the fraction of true-positives.\footnote{To be trusted by its users, the app should also aim at reducing the fraction of false-positives. This is something that our study does not consider.} Past studies like \cite{bendavid2020covid} and \cite{li2020early} suggest that ``standard'' contact tracing measures used by healthcare systems may be less efficient (fewer truly infected individuals are recognized) and slower when compared to an app (usually, several days elapse between symptom onset, the first medical visit, and the test outcome). In the computations, we model this fact by setting different parameters for people using an app and people who don't, with the latter parameters left to reasonably low values. We analyze how the impact on the epidemics depends on these parameters and the app adoption rate (Fig.\ \ref{fig:adoption_span}), showing how these are all key factors in reaching satisfactory epidemic suppression levels.

\section{The mathematical model in the homogeneous population setting}\label{Mathematical model}

In this section, we develop the core mathematical model of the paper. We do so with a simplified scenario in which the same isolation measures apply to the entire population, thus eliminating the need to distinguish between those who do and who do not use an app. Some mathematical derivations require extra care, and their complete proofs have been moved to the Appendix to prevent this section from being loaded with many formulae and a heavier formalism.

\subsection{Notations and conventions}

We consider random variables on the sample space of all infected individuals, describing (absolute) times at which certain events happen: $t^\text{I}$ (time of infection), $t^\text{S}$ (time of symptom onset), $t^\text{A}$ (time of infection notification), $t^\text{T}$ (time of positive test). These variables can take $+\infty$ as a value to express the cases in which an event never takes place (this is useful when writing relations between them).

As we want to relate these variables to the reproduction number $R_t$, which measures the average number of people infected by an individual infected at a given time $t$, it is logical that all these variables refer to the infectious age (that is, the time from the infection) of the average individual infected at $t$: so we have, for example, the relative time of symptom onset, which is the $[0,+\infty]$-valued random variable
$$\tau^\text{S}_t
=(t^\text{S}-t^\text{I})|_{t^\text{I}=t}
=t^\text{S}|_{t^\text{I}=t}-t\,.$$
We can assume that this variable is independent of the contagion time $t$. Hence, we denote it by $\tau^\text{S}$. Analogously, we have the random variables $\tau^\text{A}_t$ (time of notification for an individual infected at $t$, measured since $t$), $\tau^T_t$ (time of positive test for an individual infected at $t$, measured since $t$). 

In this section we need to understand how to describe the random variables $\tau^\text{S}$, $\tau^\text{A}_t$, $\tau^\text{T}_t$, and their relation to the reproduction number
$$R_t=\int_{[0,+\infty)}\beta_t(\tau)\,\mathrm{d}\tau\,,$$
based on the assumptions of \S\ref{The assumptions of the model}. The finite parts of these random variables are described using improper CDFs, denoted by $F^\text{S}$, $F^\text{A}_t$, and $F^\text{T}_t$ respectively, whose limit for $\tau\to+\infty$ (representing the probability that each time is less than infinite) may be less than 1. So, for example, $F^\text{T}_t(\tau)$ denotes the probability that an individual infected at $t$ tests positive within a time $\tau$ from the time of infection. $\lim_{\tau\to+\infty}F^\text{T}_t(\tau)$ is the probability that the same individual eventually tests positive.

Further auxiliary variables are introduced later on.

\subsection{The suppression model for \texorpdfstring{$R_t$}{Rt}}\label{Rt suppression model}

Recall from \S\ref{The assumptions of the model} how we assume that self-isolation works: If an infected individual tests positive, then they immediately self-isolate, resulting in a reduction, on average, of the number of people they subsequently infect by a multiplicative factor $1-\xi_t$, which we assume given, and possibly depending on the time $t$ at which the individual was infected.\footnote{Equivalently, this hypothesis could be viewed as the assumption that an individual who tests positive either self-isolates completely, without infecting anyone else from that moment, or, alternatively, does nothing, with the first circumstance happening with probability $\xi_t$.}

We can then determine a relation between the ``default" reproduction number density $\beta^0_t$, its correction $\beta_t$ as a result of the isolation measures, and the distribution of the relative time $\tau^\text{T}_t$ at which individuals infected at $t$ receive a positive test result. This relation holds for any $t$ greater or equal to the time $t_0$ at which the isolation measures are enacted.

For simplicity, let's assume for a moment that receiving a positive test and infecting someone (assuming no isolation measures) at a given infectious age $\tau$ are independent events. By $\tau$, an individual who was infected at $t$ has already received a test with probability $P(\tau^\text{T}_t<\tau) = F^\text{T}_t(\tau)$. In such a case, the number of people they infect per unit time is $(1-\xi_t)\beta^0_t(\tau)$. Alternatively, if the individual has not received a test by $\tau$ (which happens with probability $1-F^\text{T}_t(\tau)$), they do not self-isolate, and the average number of people they infect per unit time is just $\beta^0_t(\tau)$. In summary, we have, for any $\tau\in[0,+\infty)$,
\begin{equation}\label{Eq: beta_t corr indip ass}
\begin{array}{rcl}
\beta_t(\tau) &=& F^\text{T}_t(\tau)(1-\xi_t)\beta^0_t(\tau) + \left(1-F^\text{T}_t(\tau)\right)\beta^0_t(\tau)\\
&=& \beta^0_t(\tau)\left(1-\xi_t\,F^\text{T}_t(\tau) \right)\,.
\end{array}
\end{equation}
This is analogous to Eq.\ 6 in \cite{fraser2004factors}. To illustrate further, suppose that \emph{all} infected individuals test positive at the same infectious age $\tau_\text{T}$, i.e.\ $F^\text{T}_t(\tau)$ is a Heaviside function with step at $\tau_\text{T}$: then we have $\beta_t(\tau)=\beta^0_t(\tau)$ for $\tau<\tau_\text{T}$ and $\beta_t(\tau)=(1-\xi_t)\,\beta^0_t(\tau)$ for $\tau\geq\tau_\text{T}$.

However, the above result relies on the assumption of independence between testing positive and the number of people the individual would infect without isolation. In practice, this is not an adequate reflection of what occurs. For example, with COVID-19, it is known that a significant proportion of the infected population is asymptomatic, and less contagious\textendash see e.g.\ \cite{mizumoto2020estimating}, \cite{FerrEtAl2020Quant}. Given the lack of symptoms, this population has a lower probability of self-isolating. To overcome this factor, we introduce a new random variable $G$, which has a finite range $\{g_1,...,g_n\}$ that describes the severity of symptoms of an infected individual. It is assumed to be independent of the time $\tau^\text{S}$ of symptom onset, but it is related to the number of infected people and the probability of the individual recognizing their own symptoms. Then, to write a relation between $F^\text{T}_t$ and $R_t$, we restrict the relevant random variables to each possible value of $G$: for any $g=g_1,...,g_N$ we denote by
$$F^\text{T}_{t,g}(\tau)$$
the probability that an individual infected at time $t$ and with severity $g$ has tested positive by $\tau$. Similarly, we denote by
$$R_{t,g}=\int_{[0,+\infty)}\beta_{t,g}(\tau)\,\mathrm{d}\tau$$
the average number of people infected by an individual infected at $t$ and with severity $g$, and by $R^0_{t,g}$ the analogous quantity in absence of isolation measures. Assuming now that \emph{for a given $g$} the number of people infected (without isolation) and the event of being tested are independent, we write our ``suppression formula'' as
\begin{equation}\label{Eq: beta_t,g corr}
\begin{array}{c}
\beta_{t,g}(\tau) = \beta^0_{t,g}(\tau)\left(1-\xi_t\,F^\text{T}_{t,g}(\tau) \right)\,.
\end{array}
\end{equation}
In \S\ref{A: suppression formula} we include a careful derivation of this formula. Note that the relations with the aggregate variables are
$$
F^\text{T}_t=\sum_g p_g F^\text{T}_{t,g}\,,\quad 
R_t=\sum_g p_g R_{t,g}\,,
$$
where $p_g=P(G=g)$ is the probability that an infected individual has symptoms with severity $g$.\footnote{Note that, according to our convention, $R_t$ is the \emph{weighted average} of its components $R_{t,g}$. Often, in the literature (e.g.\ in \cite{FerrEtAl2020Quant}) a different convention is used, according to which the components \emph{sum} to $R_t$. To switch to the latter convention, each $R_{t,g}$ should be divided by the respective probability $p_g$.}

Also, in \S\ref{Calculations} we always take $G$ to assume the values 0 and 1 only, to describe asymptomatic versus symptomatic infected individuals. However, this formalism allows for a greater diversification of $R^0_t$, according to the severity of the illness.

We end this subsection with an example of an application of \eqref{Eq: beta_t,g corr} in a simplified scenario. Suppose that $G$ only takes the values 0 and 1, describing asymptomatic and symptomatic infected individuals, and that each constitutes half of the population. Suppose also that $\xi_t=1$, and that asymptomatic individuals are never tested, so that $F^\textup{T}_{t,0}=0$, while symptomatic individuals are tested immediately after infection, so that $F_{t,1}^\textup{T}(\tau)=\theta(\tau)$, where $\theta$ is the Heaviside function. Then, we have $R_{t,0}=R^0_{t,0}$ and $R_{t,1}=0$, so that $R_t=R^0_{t,0}/2$. Had we used Eq.\ \ref{Eq: beta_t corr indip ass} instead, we would have ended up with $R_t=R^0_t/2$, which does not take into account the fact that isolating symptomatic individuals has a greater impact on the reduction of $R_t$ than isolating the same proportion of randomly chosen individuals.
  
\subsection{First considerations on the variables \texorpdfstring{$\tau^\textup{S}$, $\tau^\textup{A}_t$, and $\tau^\textup{T}_t$}{tauSt, tauAt, and tauTt}}\label{tauSt, tauAt, and tauTt}

The distribution of the time $\tau^S$ of symptom onset is independent of the isolation policy and is considered as given throughout the paper, although its specific shape is irrelevant in this section.\footnote{In \S\ref{Calculations} we take $F^\text{S}$ to be a log-normal distribution, following the literature.}

The description of $\tau^\text{A}_t$ is addressed in the next subsection. Here, we only consider its relation with $\tau^\text{T}_t$: Having assumed that the time between notification and testing positive is described by a given random variable $\Delta^{\text{A}\to\text{T}}$, which is independent from $\tau^\text{A}_t$ and for simplicity constant in absolute time, we have
$$\tau^\text{T}_t = \tau^\text{A}_t + \Delta^{\text{A}\to\text{T}}\,.$$
The relation still holds if we restrict it to individuals with a given severity $g$, and hence
\begin{equation}\label{Eq: rel FA, FT}
F^\text{T}_{t,g}(\tau) = \int_{[0,+\infty)}F^\text{A}_{t,g}(\tau-\tau')\,\mathrm{d}F^{\text{A}\to\text{T}}(\tau')\,, 
\end{equation}
where $F^{\text{A}\to\text{T}}$ is the improper CDF of $\Delta^{\text{A}\to\text{T}}$.

\subsection{Describing \texorpdfstring{$\tau^\textup{A}_t$}{tauAt}}\label{Describing tauAt}

In this subsection, we consider the random variable $\tau^\textup{A}_t$ and study the relations with it that formalize the assumptions of \S\ref{The assumptions of the model}, namely:
\begin{itemize}
    \item When an infected individual shows symptoms, they receive an immediate notification to get tested, with probability $s^\text{s}_{t,g}$ depending on the severity $g$ of symptoms, and possibly on the infection time $t$.
    \item Immediately after an infector tests positive, each infectee is notified of the risk, with probability $s^\text{c}_t$. If the contagion takes place after the positive test, then the infectee is never notified.
\end{itemize}

We introduce two new random variables relative to individuals infected at a given time $t$, describing the receiving of a notification for either cause:
\begin{itemize}
    \item We denote by $\tau^\text{A,s}_{t,g}$ the time from infection at which an individual infected at $t$ and with severity $g$ is notified because of symptoms. We assume that this happens with probability $s^\text{s}_{t,g}$ at the time $\tau^S$ of the symptom onset, so its improper CDF is simply\footnote{For simplicity, we use a unique distribution $F^\text{S}$ for all degrees of severity. For asymptomatic individuals, $s^\text{s}_{t,g}$ would be equal to 0.}
    \begin{equation}\label{Eq: CDF FAs}
    F^\text{A,s}_{t,g} = s^\text{s}_{t,g}\, F^\text{S}\,.
    \end{equation}
    \item We denote by $\tau^\text{A,c}_t$ the time from infection at which an individual infected at $t$ receives a notification resulting from the positive test of their infector. Below, we see how to describe this.
\end{itemize}
The relation between these new variables and $\tau^\text{A}_{t,g}$ is
$$\tau^\text{A}_{t,g} =\min(\tau^\text{A,s}_{t,g}, \tau_t^\text{A,c})\,.$$
In terms of improper CDFs, and assuming independence of the two notification times, this gives
\begin{equation}\label{Eq: rel CDFs FA}
F^\text{A}_{t,g} = F^\text{A,s}_{t,g} + F_t^\text{A,c} - F^\text{A,s}_{t,g}F_t^\text{A,c}\,.
\end{equation}

Describing $\tau^\text{A,c}_t$ requires the introduction of an additional random variable $\tau^\sigma_t$, that gives, for any individual infected at $t$, the time elapsed between the the infection time of their infector and $t$. In particular, we need the joint distribution of $\tau^\sigma_t$ and the severity $G$, that can be described in terms of improper CDFs $F^{\sigma, g}_t$: Let
$$
F^{\sigma, g}_t(\tau)
$$
denote the probability that, given an individual infected at $t$, their infector has severity $g$ and was infected at a time $t'\geq t - \tau$. Note that these improper CDFs satisfy a normalization condition
$$
\lim_{\tau\to+\infty}\sum_g F^{\sigma, g}_t(\tau) = 1\,,
$$
and they are completely determined by quantities relative to times preceding $t$, namely the number of infected people and the infectiousness (more details on how they are computed are deferred to \S\ref{A: generation time}).

Now, the notification time $\tau^\text{A,c}_t$ of an individual infected at $t$ is by hypothesis equal to the testing time $\tau^\text{T}_{t'}$ of the infector minus the generation time $\tau^\sigma_t$, but only if the notification actually occurs, which happens with probability $s_t^\text{c}$ provided that the contagion took place before $\tau^\text{T}_{t'}$. Hence, to get the improper CDF $F^\text{A,c}_t$ we should first average $F^\text{T}_{t-\tau}$, translated to the left by $\tau=\tau^\sigma_t$, over all possible values of $\tau>0$, each weighted by the probability of the generation time being $\tau$. In doing this we should also treat separately the different severity levels that the infector may have, as these impact the testing time distribution. So $F^\text{A,c}_t(\rho)$ should look like a sum
$$
s_t^\text{c}\sum_g\int_{(0, +\infty)}F^\text{T}_{t-\tau,g}(\rho+\tau)\,\mathrm{d}F^{\sigma, g}_t(\tau)\,.
$$
This formula doesn't take into account that by assumption the notification can only occur after the contagion time, meaning that $F^\text{A,c}_t$ must be supported on positive numbers. This is considered by replacing the integrand with the probability
$$
P(\tau<\tau^\text{T}_{t-\tau,g}\leq\rho+\tau) = F^\text{T}_{t-\tau,g}(\rho+\tau) - F^\text{T}_{t-\tau,g}(\tau)\,.
$$
Also, in averaging the CDFs $F^\text{T}_{t-\tau}$ we should take into account the fact that the testing time of the infector is not distributed like the testing time of an arbitrary individual: Having infected someone at the infectious age $\tau$, the infector is more likely than average to be tested \emph{after} $\tau$, or to never receive a test. As we will show carefully in the Appendix, to take this into account we need to divide the integrand by the same suppression factor $1-\xi_t\,F^\text{T}_{t,g}(\tau)$ that appears in Eq.\ \eqref{Eq: beta_t,g corr}, evaluated at $t-\tau$. We conclude that, for any $\rho\geq 0$, we have
\begin{equation}\label{Eq: time evolution equation}
    F_t^\text{A,c}(\rho) = 
s^\text{c}_t\,\sum_g\int_{(0, +\infty)}
    \frac{F^\text{T}_{t-\tau,g}(\rho + \tau) - F^\text{T}_{t-\tau,g}(\tau)}
    {1-\xi_{t-\tau} F_{t-\tau,g}^\text{T}(\tau)}\,
    \mathrm{d}F^{\sigma,g}_t(\tau)\,.
\end{equation}
This result is proven rigorously in \S\ref{A: time evolution}.

\subsection{Summary and discrete-time algorithm}\label{Summary of eqns homog model}

In this section, we have translated the hypotheses made in \S\ref{The assumptions of the model} into mathematical equations describing a dynamical system. In doing this, we added a few natural assumptions of independence between the variables under considerations, namely:
\begin{itemize}
    \item the assumption in \S\ref{Rt suppression model} that the testing time of an individual \emph{with given severity} is independent from their \emph{default} infectiousness
    \item the assumption of independence between notification times $\tau^\text{A,s}_{t,g}$ and $\tau^\text{A,c}_t$ and the testing delay $\Delta^{\text{A}\to\text{T}}$
\end{itemize}
Putting all the equations together, we see that we can compute, at any time $t$, the suppressed infectiousness $\beta_{t,g}$ in terms of the parameters $s^\text{s}_{t,g}$, $s^\text{c}_t$, $\xi_t$, $\Delta^{\text{A}\to\text{T}}$ of the model, the default infectiousness $\beta^0_{t,g}$ and the other known epidemiological quantities, and the distributions relative to previous times $t'<t$.

To add an initial condition to the dynamical system, we assume that the isolation measures start at a given absolute time $t_0$, so that $s^\text{s}_t = s^\text{c}_t = \xi_t = 0$ for $t<t_0$.\footnote{Note that this doesn't prevent us from modeling isolation measures gradually put into place, which can be done simply by taking these parameters to be continuous in $t$.} Hence, all individuals infected at $t<t_0$ will never take a test (even after $t_0$) and never self-isolate. As a consequence, the effective reproduction number is $R^0_t$ for $t<t_0$, while it gets reduced according to Eq.\ \eqref{Eq: beta_t,g corr} for $t\geq t_0$. In particular, individuals infected at $t=t_0$ can only be notified of the need to take a test through symptoms, so that $F^\text{A,c}_{t_0}=0$. 

Our set of equations can be approximated with arbitrary precision to a discrete-time algorithm that computes how the epidemic evolves, given the above data.\footnote{In this discrete setting, all the integrals appearing in the equations reduce to finite sums. In fact, our approach in the Appendix is to derive the same equations starting from a discrete probabilistic model.} This is the algorithm used in the calculations of \S\ref{Calculations}.

Summing up, for each time $t\geq t_0$, the algorithm works as follows:

\begin{enumerate}
    \item Compute the number $\nu_t$ of individuals infected at $t$ and the improper CDFs $F^{\sigma,g}_t$, from $\nu_{t'}$ and $\beta_{t',g}$ for $t' < t$ (as detailed in \S\ref{A: generation time}).
    \item Compute the distribution of $\tau^\text{A,s}_{t,g}$ using Eq.\ \eqref{Eq: CDF FAs}.
    \item Compute the distribution of $\tau^\text{A,c}_t$ from $F^{\sigma,g}_t$ and the distribution of $\tau^\text{T}_{t',g}$, for $t'<t$, using Eq.\ \eqref{Eq: time evolution equation}. If $t=t_0$, just take $F^\text{A,c}_t=0$.
    \item Compute the distribution of $\tau^\text{A}_{t,g}$ using Eq.\ \eqref{Eq: rel CDFs FA}, and then the distribution of $\tau^\text{T}_{t,g}$ via Eq.\ \eqref{Eq: rel FA, FT}.
    \item Compute $\beta_{t,g}$ using the distribution of $\tau^\text{T}_{t,g}$, via Eq.\ \eqref{Eq: beta_t,g corr}.
\end{enumerate}

\section{The extended model including the use of an app for epidemic suppression }\label{Modifications with app}

So far, we have operated under the hypothesis that the ability to inform infected people that their source has been infected can be described by a single (possibly time-dependent) parameter $s_t^\text{c}$. Now, let's suppose that the population is divided into people who use an app for epidemic control and people who do not. This forces us to complicate the model of \S\ref{Mathematical model} because, when we analyze the distribution of the notification time $\tau^\text{A,c}_t$ for people with the app, we need to apply different weights to the cases in which the source of the contagion has the app or does not. We also leave open the possibility that people using the app may have a different probability $s_t^\text{s}$ of requiring a test because of their symptoms.

The generalization of the homogeneous scenario to this case is quite straightforward. In any case, some more mathematical detail has been added in \S\ref{A: scenario with app}.

\subsection{Parameters and random variables in the two-component model}

A share $\epsilon_{t,\text{app}}$ of the infected population, perhaps depending on the absolute time $t$, uses an app that may do the following:
\begin{itemize}
    \item It gives the users clear instructions on how to behave when they have symptoms indicative of the disease, assuming that this can increase the probability that an infected individual asks the health authorities to be tested because of their symptoms.
    \item It notifies the users when they have had contact with an infected individual who also uses the app, assuming that this can increase the probability that an infected individual asks the health authorities to be tested because of contact with an infected person.
\end{itemize}
 
We then distinguish $s^\text{s}_{t,g}$ into $s^\text{s,app}_{t,g}$ and $s^\text{s,no app}_{t,g}$, describing the probability that an individual infected at $t$, respectively with or without the app, is notified of the need to be tested given that they have symptoms with severity $g$. Note that 
\begin{equation}\label{Eq: sS app, no app}
s^\text{s}_{t,g}=\epsilon_{t,\text{app}}\, s^\text{s,app}_{t,g} + (1-\epsilon_{t,\text{app}})\, s^\text{s,no app}_{t,g}\,,
\end{equation}
so that this distinction does not complicate the model, and is made only for adding clarity in the computations.

The increased complexity of this situation lies in the fact that $s^\text{c}_t$ now has to be replaced by two parameters $s^\text{c,app}_t$ and $s^\text{c,no app}_t$, describing the probabilities that, given an infector-infectee pair, the positive testing of the infector occurred after the infection caused a notification to be sent to the infectee, respectively in the cases that \emph{both} the infector and the infectee have the app, and that \emph{at least one} of them does not have the app. Note that there is no relation between $s^\text{c,app}_t$ and $s^\text{c,no app}_t$ and the general $s^\text{c}_t$ as simple as Eq.\ \eqref{Eq: sS app, no app}.

We also distinguish each random variable between people with the app and people without it. For example, the time of notification due to contact now reads $\tau_{t,\text{app}}^\text{A,c}$ for people with the app and $\tau_{t,\text{no app}}^\text{A,c}$ for people without it. The relation between their improper CDFs is
$$F_t^\text{A,c} = \epsilon_{t,\text{app}}\,F_{t,\text{app}}^\text{A,c} + (1-\epsilon_{t,\text{app}})\,F_{t,\text{no app}}^\text{A,c}\,.$$
We have analogous formulae for $\tau^T_t$ and $\tau^\text{A,s}_t$, while there is no need to make a distinction for $\tau^S$.

Likewise, we have to separate $R_t$ into two components $R_{t,\text{app}}$ and $R_{t,\text{no app}}$, namely, the average number of people infected by someone infected at $t$ who has or does not have the app, respectively:
$$R_t = \epsilon_{t,\text{app}}\, R_{t,\text{app}} + (1-\epsilon_{t,\text{app}})\, R_{t,\text{no app}}\,.$$
Analogous relations hold when restricted to individuals whose illness has a given severity $g$.

It is reasonable to assume that having or not having the app is independent of symptom severity, so that, for example, the fraction of individuals infected at time $t$ using the app and with severity $g$ is $\epsilon_{t,\text{app}}p_g$. Also, while of course having an app does impact the testing time distribution and the infectiousness, we can safely suppose that it is independent of the \emph{default} infectiousness, i.e.\ the number of people an individual would have infected in the absence of measures. This is why in this scenario the suppression formula \eqref{Eq: beta_t,g corr} simply becomes
\begin{equation}\label{Eq: beta_t,g_app corr}
\begin{array}{c}
\beta_{t,g,a}(\tau) = \beta^0_{t,g}(\tau)\left(1-\xi_t\,F^\text{T}_{t,g,a}(\tau) \right)\,,
\end{array}
\end{equation}
for $a=\text{app},\text{no app}$.

\subsection{The mathematical relations between the random variables}

Now, we can write the new relations between the random variables. Eq.\ \eqref{Eq: CDF FAs} is replaced by
\begin{equation*}
F^\text{A,s}_{t,g,\text{app}} = s^\text{s,app}_{t,g}F^\text{S}\,,\quad
F^\text{A,s}_{t,g,\text{no app}} = s^\text{s,no app}_{t,g}F^\text{S}\,.
\end{equation*}
The relations \eqref{Eq: rel FA, FT}, \eqref{Eq: rel CDFs FA} immediately extend to each component.

The distributions of $\tau^\text{A,c}_{t,\text{app}}$ and $\tau^\text{A,c}_{t,\text{no app}}$ can be computed similarly to as we did in \S\ref{Describing tauAt} for the homogeneous case. But now, for each of them Eq.\ \eqref{Eq: time evolution equation} needs to be split into two parts, accounting for the cases in which the source of the infection has or doesn't have the app:
\begin{equation}\label{Eq: time evolution equation with app}
\begin{array}{rcl}
    F_{t,\text{app}}^\text{A,c}(\rho) 
    &=& 
s^\text{c,app}_t\,
\sum_g\int_{(0,+\infty)}
    \frac{F^\text{T}_{t-\tau,g,\text{app}}(\rho + \tau) - F^\text{T}_{t-\tau,g,\text{app}}(\tau)}
    {1-\xi_{t-\tau} F_{t-\tau,g,\text{app}}^\text{T}(\tau)}\,
    \mathrm{d}F^{\sigma,g,\text{app}}_t(\tau)\\
    &&
    + s^\text{c,no app}_t\,
\sum_g\int_{(0,+\infty)}
    \frac{F^\text{T}_{t-\tau,g,\text{no app}}(\rho + \tau) - F^\text{T}_{t-\tau,g,\text{no app}}(\tau)}
    {1-\xi_{t-\tau} F_{t-\tau,g,\text{no app}}^\text{T}(\tau)}\,
    \mathrm{d}F^{\sigma,g,\text{no app}}_t(\tau)
    \,.\\
\end{array}    
\end{equation}
For $F_{t,\text{no app}}^\text{A,c}$ we get a similar equation with $s^\text{c,app}_t$ replaced by $s^\text{c,no app}_t$: In this case, it doesn't matter whether or not the infector has the app. The equation simplifies to a form analogous to Eq.\ \eqref{Eq: time evolution equation}, namely
\begin{equation}\label{Eq: time evolution equation without app}
    F_{t,\text{no app}}^\text{A,c}(\rho) = 
s^\text{c,no app}_t\,\sum_g\int_{(0, +\infty)}
    \frac{F^\text{T}_{t-\tau,g}(\rho + \tau) - F^\text{T}_{t-\tau,g}(\tau)}
    {1-\xi_{t-\tau} F_{t-\tau,g}^\text{T}(\tau)}\,
    \mathrm{d}F^{\sigma,g}_t(\tau)\,.
\end{equation}
Again, we refer to the Appendix for a greater mathematical rigor: The last two equations are derived in greater detail in \S\ref{A: scenario with app}.
\section{Scenarios and calculations}\label{Calculations}

In this section, we use the models introduced in \S\ref{Mathematical model} and \S\ref{Modifications with app} to numerically compute the suppression of $R_t$ due to isolation measures in certain scenarios.

The results reported here, as well as new custom calculations, can be obtained by cloning the public Python repository \cite{mm2021epidemics}.

\subsection{General considerations} \label{calculations: general}

Some inputs of the algorithm developed are parameters or distributions describing the features of the epidemic under consideration. In this section, we focus on COVID-19, and we make the following assumptions, taking all the epidemic data from \cite{FerrEtAl2020Quant} (Table 1, in particular) for convenience:
\begin{itemize}
\item The incubation period $\tau^S$ is distributed according to a log-normal distribution:
\begin{equation*}
F^\text{S}(\tau) = F_{N_{0,1}}\left((\log(\tau) - \mu)\sigma\right)\,
\end{equation*}
where $F_{N_{0,1}}$ denotes the CDF of the standard normal distribution. The parameters $\mu=1.64$ and $\sigma=0.36$ used here imply that the mean incubation period is $\simeq 5.5$ days.

\item The default infectiousness distribution $\beta^0_t$ is assumed to depend on the absolute time $t$ only via a global factor, so that
$$\beta^0_t(\tau)=R^0_t\rho^0(\tau)\,,$$
where $\rho^0$ (which also represents the default generation time distribution) integrates to 1. It is described by a Weibull distribution with mean 5.00 and variance 3.61:
$$\rho^0(\tau) = \frac{k}{\lambda}\left(\frac{\tau}{\lambda}\right)^{k - 1}  e^{-\left(\tau/\lambda\right) ^ k}$$
with $k=2.855$, $\lambda=5.611$.
\item We simplify the severity of symptoms by considering only two levels of severity: $g=\text{sym}$ and $g=\text{asy}$, respectively for symptomatic and asymptomatic individuals. We take asymptomatic individuals as 40\%, and we assume that they account for 5\% of $R_t$.\footnote{Note that this does not include pre-symptomatic transmission, which is taken into account within the group $g=\text{sym}$.} In formulae, this means that the input parameters of our model are
\begin{equation*}
\begin{array}{c}
p_\text{sym}=0.6\,,\quad p_\text{asy}=0.4\,,\\
\beta^0_{t, \text{sym}}=\frac{0.95}{0.6}R^0_t\rho^0\,,\quad
\beta^0_{t, \text{asy}}=\frac{0.05}{0.4}R^0_t\rho^0\,.
\end{array}
\end{equation*}
\end{itemize}

All these assumptions hold throughout the whole section except for \S\ref{Dependency on epidemic data}, where we check how the results change using different epidemic data. The other parameters of the model, describing the isolation measures, are selected later. 

As Key Performance Indicators (KPIs) describing the effectiveness of the isolation measures, we look at the reduction of $R_t$ compared to the value $R_t^0$ it would take in the absence of measures. We call \emph{effectiveness} of the isolation measures the relative suppression of $R_t^0$:
\begin{equation}
\text{Eff}_t := 1 - \frac{R_t}{R^0_t}\,.
\end{equation} 
Thus, $\text{Eff}_t=0$ indicates that there is no effect on $R^0_t$, while $\text{Eff}_t=1$ describes a complete suppression of the contagion. We will see in \S\ref{Dependency on epidemic data} that the dependency of $\text{Eff}_t$ on the default reproduction number $R^0_t$ is very weak.\footnote{This dependency is due to higher order effects: A higher $R^0_t$ means that the distribution of $\tau^\sigma_t$ is more concentrated on small values, and hence the most recent testing time distributions have a greater weight in the time-evolution equation \eqref{Eq: time evolution equation}.} As such, attempts to model a realistic profile for $R^0_t$ have little relevance to our computations, as any choice of $R^0_t$ leads to almost the same $\text{Eff}_t$. Thus, in the rest of this section we simply take
$$R^0_t=1\,.$$

Another useful KPI is the probability that an individual infected at a certain time $t$ is eventually found to be positive, namely the limit
$$F^\text{T}_t(\infty) := \lim_{\tau\to+\infty}F^\text{T}_t(\tau)\,.$$

In the remainder of this section, we report the results of some selected calculations, considering both the ``homogeneous'' scenario of \S\ref{Mathematical model} and, in greater detail, the scenario of \S\ref{Modifications with app}, in which an app for epidemic control is used. First, we study how the above KPIs evolve in time for certain input parameter choices. Then, we focus on the limits for $t\to+\infty$ of these KPIs, i.e., their ``stable'' values after a sufficient number of iterations, to study how these vary when we change certain input parameters, leaving the others fixed.

\subsection{Suppression of \texorpdfstring{$R_t$}{Rt} with homogeneous isolation measures}

First, we perform some calculations in the setting of \S\ref{Mathematical model}, where the isolation measures are ``homogeneous'' within the whole population. We recall the parameters that describe this situation, some of which remain fixed in all the calculations:\footnote{In all the examples these parameters are constant in time. Hence we remove the subscript $t$ from them.}

\begin{center}
 \begin{tabular}{||p{1.8cm} | p{5.5cm} | p{3cm}||} 
 \hline
 Parameter & Meaning & Value\\ [0.5ex] 
 \hline\hline
 $s^\text{s}_\text{sym}$ & Probability that a symptomatic infected individual is notified of the infection because of their symptoms & Not fixed \\ 
 \hline
 $s^\text{s}_\text{asy}$ & As above, but for the asymptomatic & 0\\ 
 \hline
 $\xi$ & Probability that someone testing positive self-isolates & Not fixed \\ 
 \hline
 $\Delta^{\text{A}\to\text{T}}$ & Time from notification to positive testing & Constant distribution, whose value is not fixed at this moment\\
 \hline
 $t_0$ & Time at which isolation measures begin & 0 \\
 \hline
\end{tabular}
\end{center}
Note that assuming that $\Delta^{\text{A}\to\text{T}}$ is a constant random variable means that we are modeling that all individuals notified of the risk test positive, and take the same time to do so. Although unrealistic, this assumption makes little difference to the results. It is made here for simplicity, although it can be easily changed by using a more realistic $\Delta^{\text{A}\to\text{T}}$, when this datum is available.

\subsubsection{Time evolution with isolation due to both symptoms and contact-tracing}\label{Time ev homog sympt+ct}

We now choose the following parameters, describing an optimistic situation, with reasonable efficiencies in spotting infected individuals:
\begin{center}
 \begin{tabular}{||l | c||} 
 \hline
 Parameter & Value\\ [0.5ex] 
 \hline\hline
 $s^\text{s}_\text{sym}$ & 0.5 \\ 
 \hline
 $s^\text{c}$ &  0.7\\ 
  \hline
 $\xi$ & 0.9 \\ 
 \hline
 $\Delta^{\text{A}\to\text{T}}$ & 2 \\ 
 \hline
\end{tabular}
\end{center}

\begin{figure}[htb]
\begin{center}
\includegraphics[width=0.7\textwidth]{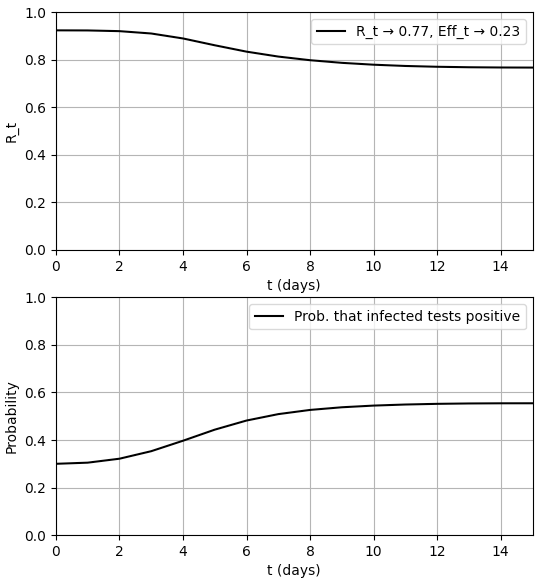}    
\end{center}
\caption{$R_t$ evolution in the homogeneous model, in an optimistic scenario}
\label{Fig: homogeneous}
\end{figure}

The results are shown in Fig.\ \ref{Fig: homogeneous}. Note that immediately at $t=0$ $R_t$ drops to around 0.92, as half of the symptomatic individuals are notified as soon as they show some symptoms, and they then infect a reduced number of people. Subsequently, $R_t$ continues to decrease due to contact-tracing, quickly approaching its limit value $R_\infty\simeq 0.77$ (i.e.,\ 84\% of the value it would have had with isolation due to symptoms only).

\subsubsection{Dependency on testing timeliness}

We now focus on the limit value $\text{Eff}_\infty$, investigating its dependency on the time $\Delta^{\text{A}\to\text{T}}$ from a notification to the positive result of the test (recall that we are assuming that $\Delta^{\text{A}\to\text{T}}$ is a constant random variable).
 
Like in \S\ref{Time ev homog sympt+ct}, the other parameters are fixed as follows:
\begin{center}
 \begin{tabular}{||l | c||} 
 \hline
 Parameter & Value\\ [0.5ex] 
 \hline\hline
 $ s^\text{s}_\text{sym} $ & 0.5 \\ 
 \hline
 $ s^\text{c} $ &  0.7\\ 
  \hline
 $\xi$ & 0.9 \\ 
 \hline
\end{tabular}
\end{center}

The result is plotted in Fig.\ \ref{fig:timeliness_span}. The effectiveness of the isolation measures improves dramatically with the ability to test (and then isolate) infected individuals as soon as possible after their notification of possible infection.

\begin{figure}[htb]
\begin{center}
\includegraphics[width=0.5\textwidth]{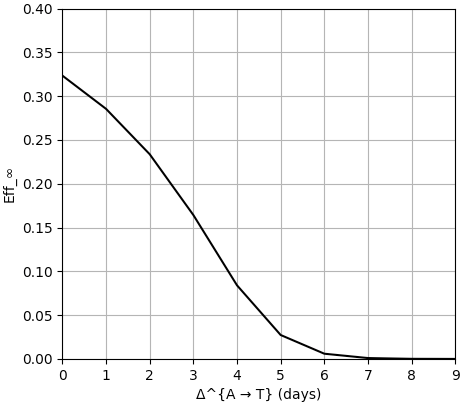}
\end{center}
\caption{$\text{Eff}_\infty$ as a function of the time $\Delta^{\text{A}\to\text{T}}$ from notification to positive testing.}
\label{fig:timeliness_span}
\end{figure}

\subsubsection{Dependency on the epidemic data used}\label{Dependency on epidemic data}

In this subsection we briefly explore what happens if we change some of the data describing the epidemic, that were introduced in \S\ref{calculations: general} and used elsewhere in this section. This is done to see how $\text{Eff}_\infty$ depends on these data. The other parameters, describing the isolation measures, are fixed as usual:
\begin{center}
 \begin{tabular}{||l | c||} 
 \hline
 Parameter & Value\\ [0.5ex] 
 \hline\hline
 $s^\text{s}_\text{sym}$ & 0.5 \\ 
 \hline
 $s^\text{c}$ &  0.7\\ 
  \hline
 $\xi$ & 0.9 \\ 
 \hline
 $\Delta^{\text{A}\to\text{T}}$ & 2 \\ 
 \hline
\end{tabular}
\end{center}

First, we let the fraction $p_\textup{sym}$ of symptomatic individuals vary, along with their contribution to $R^0_t$---let us denote it here by $\kappa$---that is elsewhere taken as $\kappa=0.95$. Recall that
$$
R^0_{t,\textup{sym}}=\frac{\kappa}{p_\textup{sym}}R^0_t\,,\quad
R^0_{t,\textup{asy}}=\frac{1-\kappa}{1-p_\textup{sym}}R^0_t\,.
$$
The value of $\text{Eff}_\infty$ for a few choices of $p_\textup{sym}$ and $\kappa$ is plotted in Fig.\ \ref{fig:symptomatics_dependency}, where it is apparent how the result is robust with respect to changes in these input data. Note that if we fix $p_\textup{sym}$ and let $\kappa$ vary, then the two components of $R^0_t$ (and hence those of $R_t$) are linearly rescaled, meaning that $\text{Eff}_\infty$ also changes linearly.

\begin{figure}[htb]
\begin{center}
\includegraphics[width=0.7\textwidth]{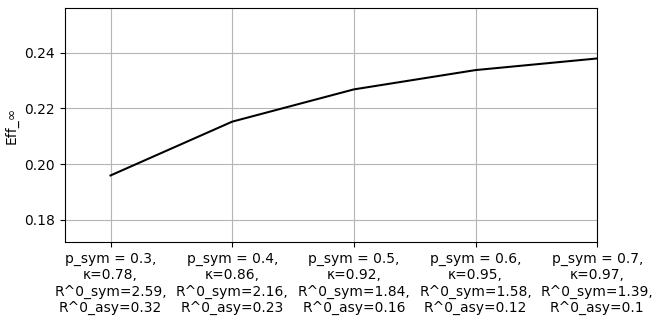}
\includegraphics[width=0.7\textwidth]{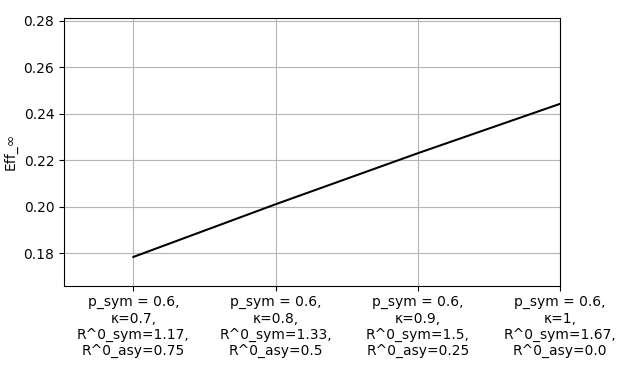}
\end{center}
\caption{$\text{Eff}_\infty$ for some values of $p_\textup{sym}$ and $\kappa$.}
\label{fig:symptomatics_dependency}
\end{figure}

Second, we fix $p_\textup{sym}=0.6$ and $\kappa=0.95$ as usual, and we modify instead the density $\rho^0$ of the default generation time, by replacing it with
$$\rho_f^0(\tau)=\frac{1}{f}\rho^0(\tau/f)$$
for $f>0$. Note that this implies that the expected value of the default generation time (denoted here by $\tau^{0,\textup{C}}$) is multiplied by $f$:
$$\mathbb{E}(\tau^{0,\textup{C}}) = f(5\textup{ days})\,.$$

Fig.\ \ref{fig:generation_time_dependency} depicts the relation between $\mathbb{E}(\tau^{0,\textup{C}})$ and $\text{Eff}_\infty$, as $f$ varies. As expected, the isolation measures become more effective as the time taken by the infection to be transmitted increases.

\begin{figure}[htb]
\begin{center}
\includegraphics[width=0.5\textwidth]{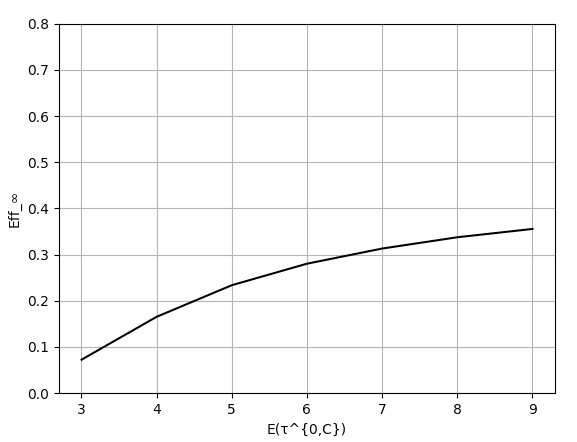}
\end{center}
\caption{$\text{Eff}_\infty$ for some rescalings of the distribution of $\tau^\textup{0,C}$.}
\label{fig:generation_time_dependency}
\end{figure}

Finally, Fig.\ \ref{fig:R0_dependency} shows how $\text{Eff}_\infty$ changes slightly as we change the value of $R^0_t$ (for $t\geq 0$).

\begin{figure}[htb]
\begin{center}
\includegraphics[width=0.5\textwidth]{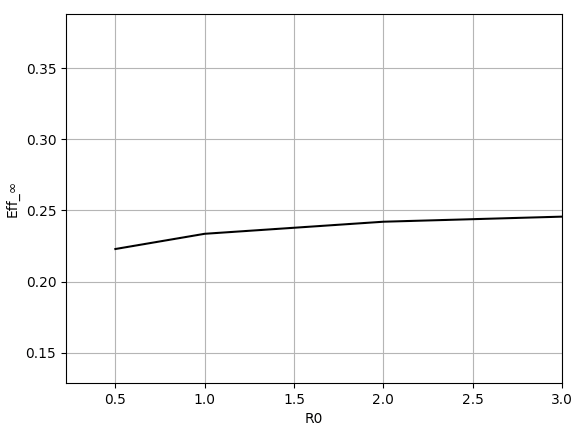}
\end{center}
\caption{$\text{Eff}_\infty$ for some values of $R^0_t$, for $t\geq 0$.}
\label{fig:R0_dependency}
\end{figure}

\subsection{Suppression of \texorpdfstring{$R_t$}{Rt} in the case of app usage}

Now, we focus on applying the model of \S\ref{Modifications with app} to study how $R_t$ is suppressed when a fraction of the population uses an app for epidemic control.

In this case, we summarize the input parameters in the following table, fixing some of them to the given values for the rest of the section (unless explicitly mentioned).

\begin{center}
 \begin{tabular}{||p{1.8cm} | p{5.5cm} | p{3cm}||} 
 \hline
 Parameter & Meaning & Value\\ [0.5ex] 
 \hline\hline
 $s^\text{s,app}_\text{sym}$ & Probability that a symptomatic infected individual using the app is notified of the infection because of their symptoms & Not fixed \\ 
 \hline
 $s^\text{s,no app}_\text{sym}$ & As above, but for individuals without the app & 0.2 \\ 
 \hline
 $s^\text{s,app}_\text{asy}$, $s^\text{s,no app}_\text{asy}$ & As with the two parameters above, but for asymptomatic individuals & 0, 0\\ 
 \hline
  $s^\text{c,app}$ & Probability that an infected individual with the app is notified of the infection because of their source having tested positive & Not fixed\\
 \hline
 $s^\text{c,no app}$ & Probability that an infected individual without the app is notified of the infection because of their source having tested positive & 0.2\\
 \hline
 $\xi$ & Probability that someone testing positive self-isolates & Not fixed \\ 
 \hline
 $\Delta^{\text{A}\to\text{T}, \text{app}}$, $\Delta^{\text{A}\to\text{T}, \text{no app}}$ & Time from notification to positive testing for people with and without the app, respectively & Constant distributions, whose values are not fixed at this moment\\
 \hline
 $\epsilon_{t,\text{app}}$ & Fraction of the population adopting the app at time $t$ & Not fixed \\
 \hline
 $t_0$ & Time at which isolation measures begin & 0 \\
 \hline
\end{tabular}
\end{center}

\subsubsection{Time evolution in an optimistic scenario}\label{Time evolution optimistic scenario}

We start with an optimistic scenario, where the app is effective at recognizing infected individuals from symptoms and contact-tracing information. The internal predictive models that estimate the probability of an individual being infected have high efficiencies (a situation likely bound to the possibility of training the predictive models on real data, in practice). The app is adopted by a large fraction (60\%) of the population, and is trusted, so that most of the people notified take a test and self-isolate. The app also helps a notified individual to get tested more quickly.\footnote{Studies such as \cite{li2020early} report that the time from symptom onset to testing through the ``conventional'' channels (health care system) is in the order of several days. An app is expected to have substantial chances to improve this performance, being a prompt-instrument by construction (for example, when compared with the friction of calling a doctor, inserting symptom descriptions into the app is likely easier), so that
$$
\Delta^{\text{A}\to\text{T}, \text{app}} < \Delta^{\text{A}\to\text{T}, \text{no app}}\,.
$$}

\begin{center}
 \begin{tabular}{||l | c||} 
 \hline
 Parameter & Value\\ [0.5ex] 
 \hline\hline
 $s^\text{s,app}_\text{sym}$ & 0.8 \\ 
 \hline
 $s^\text{c,app}$ &  0.8\\ 
 \hline
 $\xi$ & 0.9 \\ 
 \hline
 $\epsilon_\text{app} $ & 0.6 \\
 \hline
 $\Delta^{\text{A}\to\text{T}, \text{app}}$ & 2 \\ 
 \hline
 $\Delta^{\text{A}\to\text{T}, \text{no app}}$ & 4 \\ 
 \hline
\end{tabular}
\end{center}

As we start from $R^0_t = 1$, we reach a limit value of $R_\infty \simeq 0.84$ for an effectiveness of $0.16$. Note also that $R_{\infty,\text{app}}\simeq 0.75$, while $R_{\infty,\text{no app}}\simeq 0.97$. The time evolution of $R_t$, along with the other main quantities of interest, is shown in Fig.\ \ref{fig:good_app}.

\begin{figure}[htb]
\includegraphics[width=\textwidth]{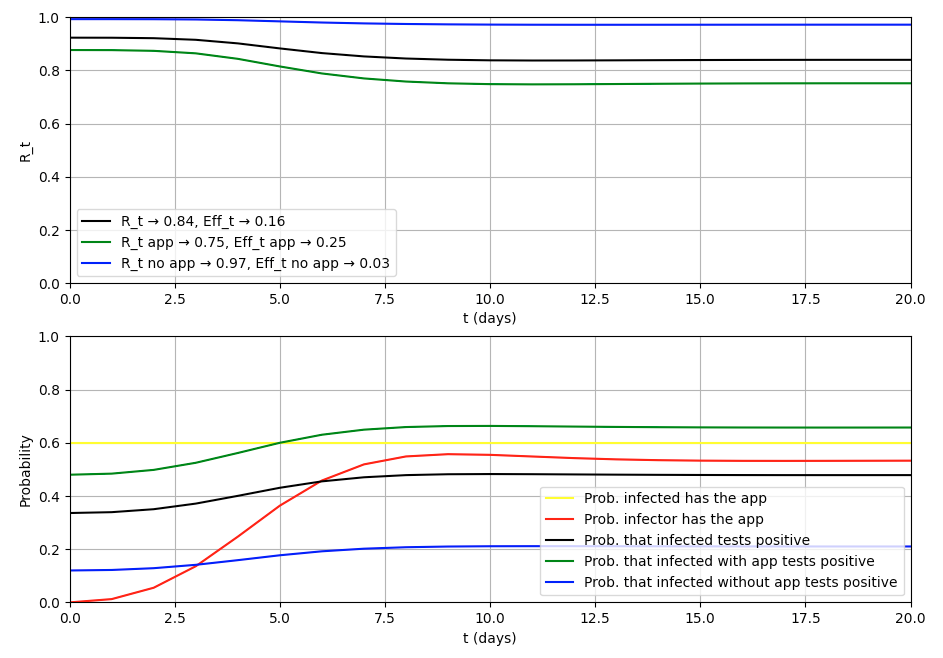}
\caption{KPIs evolution in the optimistic scenario.}
\label{fig:good_app}
\end{figure}

\subsubsection{Time evolution in a pessimistic scenario}

We now run an analogous computation in a ``pessimistic'' scenario. The app can only recognize infected individuals from contact-tracing information, and not from symptoms ($s^\text{s,app}_\text{sym}$ consequently defaults to the no-app value). In addition, we assume a low efficiency $s^\text{c,app} = 0.5$, perhaps due to poor predictive models. Also, only 70\% of those testing positive self-isolate.

\begin{center}
 \begin{tabular}{||l | c||} 
 \hline
 Parameter & Value\\ [0.5ex] 
 \hline\hline
 $s^\text{s,app}_\text{sym}$ & 0.2 \\ 
 \hline
 $s^\text{c,app}$ &  0.5\\ 
 \hline
  $\xi$ & 0.7 \\ 
 \hline
 $\epsilon_\text{app} $  & 0.6 \\ 
 \hline
 $\Delta^{\text{A}\to\text{T}, \text{app}}$ & 2 \\ 
 \hline
 $\Delta^{\text{A}\to\text{T}, \text{no app}}$ & 4 \\ 
 \hline
\end{tabular}
\end{center}

Even with a high app adoption rate (60\% of the population), the effectiveness drops dramatically. We get $R_\infty \simeq 0.96$ and $\text{Eff}_\infty\simeq 0.06$. Most notably, the app does not change things much with respect to ``standard'' isolation measures: $R_{\infty,\text{app}} \simeq 0.95$ and $R_{\infty,\text{no app}}\simeq 0.99$.

\subsubsection{Time evolution in the case of gradual adoption of the app}

Now, we study the evolution of $R_t$ in a scenario whereby the fraction $\epsilon_{t,\text{app}}$ of people using the app is not constant, but increasing in a linear fashion until it reaches 60\% in 30 days:
$$\epsilon_{t,\text{app}}=0.6t/30\text{ for }0\leq t<30,\ \epsilon_{t,\text{app}}=0.6\text{ for }t\geq 30\,.$$
The other parameters are chosen as in the optimistic scenario of \S\ref{Time evolution optimistic scenario}:

\begin{center}
 \begin{tabular}{||l | c||} 
 \hline
 Parameter & Value\\ [0.5ex] 
 \hline\hline
 $s^\text{s,app}_\text{sym}$ & 0.8 \\ 
 \hline
 $s^\text{c,app}$ &  0.8\\ 
 \hline
 $\xi$ & 0.9 \\ 
 \hline
 $\Delta^{\text{A}\to\text{T}, \text{app}}$ & 2 \\ 
 \hline
 $\Delta^{\text{A}\to\text{T}, \text{no app}}$ & 4 \\ 
 \hline
\end{tabular}
\end{center}

As shown in Fig.\ \ref{Fig: gradual app adoption}, $R_t$ decreases until stabilizing again to the same value obtained in \S\ref{Time evolution optimistic scenario}, although it takes more time to do so. The limit values of the KPIs are not changed by a gradual adoption of the app, compared with a prompt adoption.

\begin{figure}[htb]
\includegraphics[width=\textwidth]{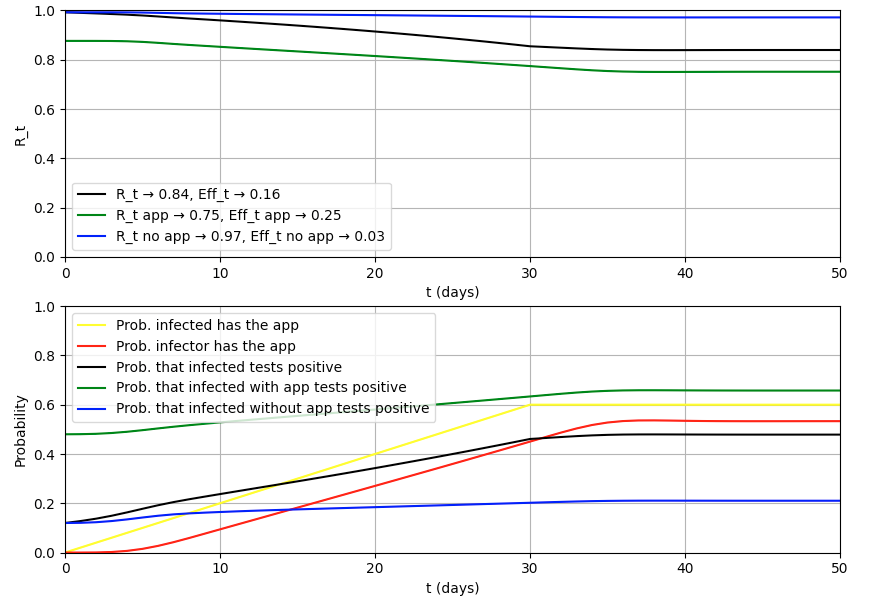}
\caption{KPIs evolution in case of gradual adoption of the app.}
\label{Fig: gradual app adoption}
\end{figure}

\subsubsection{Dependency of effectiveness on the efficiencies \texorpdfstring{$s^\text{s}$ and $s^\text{c}$}{ss and sc}}

We now focus on the study of how the limit values of the KPIs change when we vary certain parameters, starting with the app efficiencies $s^\text{s,app}_\text{sym}$ and $s^\text{c,app}$. In Fig.\ \ref{fig:efficiency_span}, we plot $\text{Eff}_\infty$ as a function of these two parameters, while the others are fixed to the following values:
\begin{center}
 \begin{tabular}{||l | c||} 
 \hline
 Parameter & Value\\ [0.5ex] 
 \hline\hline
 $\xi$ & 0.9 \\ 
 \hline
 $\epsilon_\text{app} $ & 0.6 \\
 \hline
 $\Delta^{\text{A}\to\text{T}, \text{app}}$ & 2 \\ 
 \hline
 $\Delta^{\text{A}\to\text{T}, \text{no app}}$ & 4 \\ 
 \hline
\end{tabular}
\end{center}

\begin{figure}[htb]
\begin{center}
\includegraphics[width=0.5\textwidth]{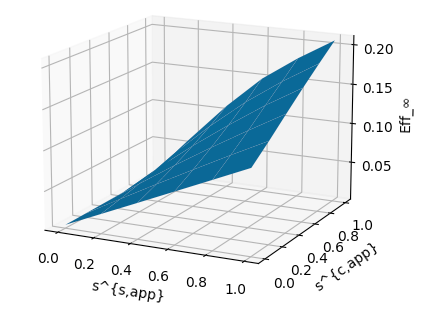}
\end{center}
\caption{$\text{Eff}_\infty$ as a function of the efficiencies $s^\text{s,app}_\text{sym}$ and  $s^\text{c,app}$.}
\label{fig:efficiency_span}
\end{figure}

\subsubsection{Dependency on the app adoption}

\begin{figure}[htb]
\begin{center}
\includegraphics[width=0.5\textwidth]{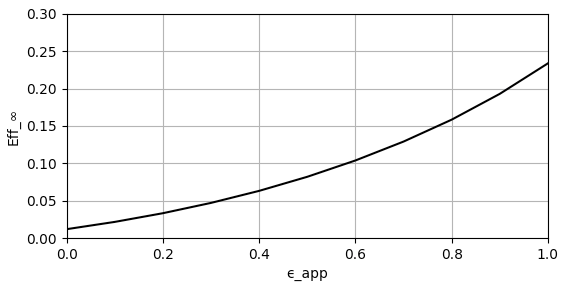}
\end{center}
\caption{$\text{Eff}_\infty$ as a function of app adoption $\epsilon_\text{app}$.}
\label{fig:adoption_span}
\end{figure}

In Fig.\ \ref{fig:adoption_span}, we can observe the dependency of the effectiveness $\text{Eff}_\infty$ on the share $\epsilon_\text{app}$ of the population using the app. The remaining parameters are fixed to these values:
\begin{center}
 \begin{tabular}{||l | c||} 
 \hline
 Parameter & value\\ [0.5ex] 
 \hline\hline
 $s^\text{s,app}_\text{sym}$ & 0.5 \\ 
 \hline
  $s^\text{c,app}$ &  0.7\\ 
 \hline
  $\xi$ & 0.9 \\ 
 \hline
 $\Delta^{\text{A}\to\text{T}, \text{app}}$ & 2 \\ 
 \hline
 $\Delta^{\text{A}\to\text{T}, \text{no app}}$ & 4 \\ 
 \hline
\end{tabular}
\end{center}

\appendix
\section{Appendix: formalism and detailed derivations of mathematical results}

The goal of this Appendix is to introduce a mathematical framework in which the hypotheses of our model can be formulated precisely and their consequences proven rigorously. In particular, we will derive the formula \eqref{Eq: beta_t,g corr} describing the suppression of $R_t$ due to the testing and isolation policies, and the time evolution equation \eqref{Eq: time evolution equation}. Some more details about the two-component scenario of \S\ref{Modifications with app} are added.

\subsection{Modeling a deterministic epidemics with a probability space}

In this subsection, we define the fundamental components of our framework. $\Omega$ denotes the set of all the individuals infected during the epidemic. It is endowed with two functions. First,
$$t^\text{I}:\Omega\to[0,+\infty)$$
associates to each individual $\omega\in\Omega$ the absolute time of their infection. Note that we take $0$ as the initial time of the epidemics. $t^I$ partitions $\Omega$ into a foliation
$$\Omega = \cup_{t\in\mathbb{R}}\Omega_t
 := \cup_{t\in\mathbb{R}}\{\omega\in\Omega\ |\ t^\text{I}(\omega)=t\}\,.$$
We also write
$$
\Omega_{>0}:=\cup_{t\in\mathbb{R}^+}\Omega_t
$$
for the set of individuals infected at a positive time.

Second, for any individual $\omega\in\Omega_{>0}$ we denote by $\sigma(\omega)\in\Omega$ their infector. This defines a map
$$\sigma:\Omega_{>0}\to\Omega\,.$$
It is natural to assume that, for all $\omega\in\Omega_{>0}$,
$$
t^\text{I}(\sigma(\omega)) < t^\text{I}(\omega)\,. 
$$

\begin{figure}[htb]
\begin{center}
\includegraphics[width=0.5\textwidth]{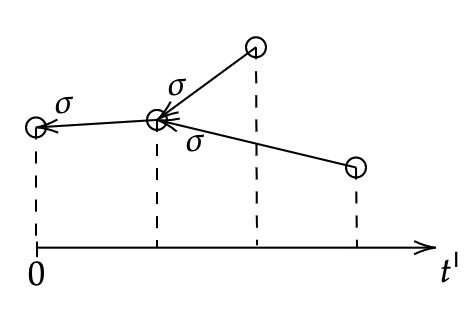}
\end{center}
\caption{Schematic representation of $\Omega$, $t^\text{I}$, and $\sigma$. Dots represent individuals $\omega\in\Omega$.}
\label{Fig: omega}
\end{figure}

As is the case in the rest of the paper, we will consider other quantities referring to infected individuals and study the mathematical relations between them, with special attention to their \emph{average} properties over all individuals infected at a given time $t$. Hence, we will study functions defined over $\Omega$, and to talk about their averages over each set $\Omega_t$ we will introduce a probability measure $P$ on $\Omega$. Since $\Omega$ is finite and we want to weight all individuals equally, $P$ is the uniform discrete probability measure:
$$
P(E) = \frac{\#E}{\#\Omega}\,,
$$
where $\#E$ denotes the cardinality of a set $E\subset\Omega$. Relevant quantities then become random variables, and we are interested in studying their distributions. This always reduces to solving certain counting problems, and introducing $P$ is largely a way to conveniently write formulas using the language of probability theory. Let us stress that our methodology does not involve any simulation of random processes: The history of the epidemic is completely determined by the triple $(\Omega, t^\text{I},\sigma)$, and our study of its evolution consists of writing deterministic relations that express random variables restricted to a time slice $\Omega_t$ in terms of random variables restricted to slices $\Omega_{t'}$, for times $t'<t$.

The probability measure $P$ ``disintegrates'' along $t_I$, giving a uniform probability measure $P_t$ on each $\Omega_t$. Also, for any random variable $X:\Omega\to\mathbb{R}$ we will denote by
$$
X_t := X{\restriction_{\Omega_t}}
$$
its restriction to $\Omega_t$. We will be mostly interested in studying the distributions of such restrictions of random variables. In considering quantities like the expected value
$$
\mathbb{E}_{P_t}(X_t) = \int_{\Omega_t}X\,\mathrm{d}P_t = \frac{1}{\#\Omega_t}\sum_{\omega\in\Omega_t}X(\omega)
$$
of $X_t$, we will always make the relevant probability measure $P_t$ explicit to avoid confusion.


\subsection{Infector-infectee pairs, generation time and the reproduction number}

Let us introduce some additional notation, for future convenience: First, we define the set
$$
\tilde\Omega = \{(\omega',\omega)\in\Omega\times\Omega_{>0}\ |\ \sigma(\omega) = \omega'\}
$$
of infector-infectee pairs. This is the graph of the map $\sigma$, which thus determines a bijection $\Omega_{>0}\to\tilde\Omega$. We also consider two functions describing the generation time: a function $\tau^\text{C}:\tilde\Omega\to\mathbb{R}^+$, given by
$$
\tau^\text{C}(\omega',\omega) := t^\text{I}(\omega) - t^\text{I}(\omega')\,,
$$
and a function $\tau^\sigma:\Omega_{>0}\to\mathbb{R}^+$ given by
$$
\tau^\sigma(\omega) := \tau^\text{C}(\sigma(\omega))\,.
$$

Consider, for all $\tau\in\mathbb{R}^+$, the random variable $n^\tau\!:\Omega\to\mathbb{N}$ such that $n^\tau(\omega)$ is the number of people infected by $\omega$ at $\omega$'s infectious age $\tau$ (that is, at the absolute time $t^\text{I}(\omega) + \tau$):
$$
n^\tau(\omega) := \#\{\omega'\in\Omega_{>0}\ |\ \sigma(\omega')=\omega\,,\  \tau^\sigma(\omega')=\tau\}\,.
$$
Also, for $\tau\in[0,+\infty]$, we denote by $N^\tau\!:\Omega\to\mathbb{N}$ the random variable that counts all individuals infected \emph{within} the infectious age $\tau$:
$$
N^\tau(\omega) := \sum_{\tau'\leq \tau}n^{\tau'}(\omega) = \#\{\omega'\in\Omega_{>0}\ |\ \sigma(\omega')=\omega\,,\  \tau^\sigma(\omega')\leq\tau\}\,.
$$
In particular, $N^\infty(\omega)$ is the total number of people infected by $\omega$. The average values of these variables are key indicators of the speed of propagation of the epidemics. In particular, restricting to an absolute time $t$, we let
$$
R_t := \mathbb{E}_{P_t}(N_t^\infty)
$$
be the \emph{effective reproduction number} at $t$. Note that averaging on \emph{all} infected individuals simply gives $\mathbb{E}_P(N^\infty)=1$. We also consider the average values of $N_t^\tau$, for finite $\tau$:
$$
B_t(\tau) := \mathbb{E}_{P_t}(N_t^\tau)\,.
$$
Notice that $B_t$ is an improper CDF supported on $\mathbb{R}^+$, and because of the finiteness of $\Omega_t$, it is a sum of finitely many step functions. It represents the \emph{cumulative infectiousness} of individuals infected at $t$, and tends to $R_t$ for $\tau\to+\infty$.

For practical reasons, it is common to tacitly consider a \emph{continuum limit} in which each $\#\Omega_t$ tends to infinity and all random variables become continuous. $B_t$ is then approximated by a smooth function, whose derivative is denoted by $\beta_t$ (as is the case in the rest of the paper):
$$
B_t(\tau) \simeq \int_0^\tau\beta_t(\tau')\,\mathrm{d}\tau'\,.
$$
However, in this Appendix we always work in the discrete setting discussed so far, and then we consider the continuum limit only to get formulas for $\beta_t$, for consistency with the standard terminology and notation. Using the formalism of measure theory, or simply writing the relations between random variables in terms of their CDFs, allows us to treat both the discrete scenario and the continuum limit in a unified notation.


\subsection{The suppression formula}\label{A: suppression formula}

Here, we discuss in greater mathematical detail the content of \S\ref{Rt suppression model}. In particular, we will derive the suppression formula (Eq.\ \eqref{Eq: beta_t,g corr}) relating the reproduction number $R_t$ and the distribution of the random variable $\tau^\text{T}\!:\Omega\to[0,+\infty]$ describing the infectious age (possibly infinite) at which each individual is tested positive.

To do this, we introduce two additional random variables $n^{0,\tau},N^{0,\tau}:\Omega\to\mathbb{N}$ which are analogues to $n^\tau$ and $N^\tau$, but which instead count the number of individuals that each $\omega\in\Omega$ \emph{would have infected without isolation measures}. Similarly, we denote by
$$
B_t^0(\tau) := \mathbb{E}_{P_t}(N_t^{0,\tau})\,,\quad
R^0_t := \mathbb{E}_{P_t}(N_t^{0,\infty})\,,
$$
and $\beta^0_t$ the analogues of $B_t$, $R_t$ and $\beta_t$ in the absence of isolation measures.

Recall that we assumed the average number of people infected by each individual $\omega$ is reduced by a factor $1-\xi_t$ (possibly depending on the infection time $t := t^\text{I}(\omega)$) at times $\tau$ greater or equal than the testing time $\tau^\text{T}(\omega)$. This is encoded by the following relation between the expected values of $n_t^\tau$ and $n_t^{0,\tau}$ conditioned by $\tau_t^\text{T}$: For all $\tau,\rho\in[0,+\infty]$, we postulate that
\begin{equation}\label{Eq: suppr hyp}
\begin{array}{rcl}
    \mathbb{E}_{P_t}(n_t^\tau|\tau_t^\text{T}=\rho) 
    &=&  
    (1-\xi_t\,\delta_{\tau\geq\rho})\,\mathbb{E}_{P_t}(n_t^{0,\tau}|\tau_t^\text{T}=\rho)\\
    &=& 
    \left\{\begin{matrix}
    \mathbb{E}_{P_t}(n_t^{0,\tau}|\tau_t^\text{T}=\rho) & \text{for }\tau<\rho\\ 
    (1-\xi_t)\,\mathbb{E}_{P_t}(n_t^{0,\tau}|\tau_t^\text{T}=\rho) & \text{for }\tau\geq\rho
    \end{matrix}\right.\,,
\end{array}
\end{equation}
where $\delta_{\tau\geq\rho}:=\chi_{[\rho, +\infty]}(\tau)$ is $1$ when $\tau\geq\rho$ and $0$ otherwise. Now we would like to remove the conditioning on $\tau_t^\text{T}$ from the expected values of $n_t^{0,\tau}$ to get an expression in terms of known quantities only. If, for simplicity, we supposed that $\tau^\text{T}$ and $n_t^{0,\tau}$ are independent, then Eq.\ \eqref{Eq: suppr hyp} would reduce to $(1-\xi\,\delta_{\tau\geq\rho})\,\mathbb{E}_{P_t}(n_t^{0,\tau})$.
But, as discussed in \S\ref{Rt suppression model}, this is not a realistic hypothesis. Instead, we only assume that $\tau^\text{T}$ and $n_t^{0,\tau}$ are independent when restricted to individuals having the same \emph{severity of illness}, which we describe through a random variable
$$G:\Omega\to\mathbb{R}\,.$$
In other words, we take $\tau_t^\text{T}$ and $n_t^{0,\tau}$ to be \emph{conditionally} independent with respect to $G$.\footnote{Remember that in this work, the joint distribution of $n^{0,\tau}_t$ and $G$ is assumed known, and we want to study how our assumptions on the isolation measures determine the distribution of $n_t^\tau$, and hence $R_t$.} We assume now that the suppression formula applies equally to individuals of all degrees of severity:
\begin{equation}\label{Eq: suppr hyp by sev}
    \mathbb{E}_{P_{t,g}}(n_{t,g}^\tau|\tau_{t,g}^\text{T}=\rho) 
    = (1-\xi_t\,\delta_{\tau\geq\rho})\,\mathbb{E}_{P_{t,g}}(n_{t,g}^{0,\tau}|\tau_{t,g}^\text{T}=\rho)\,,
\end{equation}
having introduced
$$\Omega_{t,g}:=\{\omega\in\Omega_t\ |\ G(\omega)=g\}$$
and the obvious notation for restrictions of random variables to $\Omega_{t,g}$ and for the uniform probability measure $P_{t,g}$ on it. The assumption of conditional independence implies
\begin{equation}\label{Eq: suppr hyp 2}
    \mathbb{E}_{P_{t,g}}(n_{t,g}^\tau|\tau_{t,g}^\text{T}=\rho) = 
(1-\xi_t\,\delta_{\tau\geq\rho})\,\mathbb{E}_{P_{t,g}}(n_{t,g}^{0,\tau})\,.
\end{equation}
Summing over $\rho\in[0,+\infty]$ we find
$$
\begin{array}{rcl}
    \mathbb{E}_{P_{t,g}}(n_{t,g}^\tau) &=&
       \sum_{\rho\in[0,+\infty]}
       P_{t,g}(\tau_{t,g}^\text{T}=\rho)\,
       \mathbb{E}_{P_{t,g}}(n_{t,g}^\tau|\tau_{t,g}^\text{T}=\rho)\\
     &=& 
       \mathbb{E}_{P_{t,g}}(n_{t,g}^{0,\tau})
       \sum_{\rho\in[0,+\infty]}(1-\xi_t\,\delta_{\tau\geq\rho})\,P_{t,g}(\tau_{t,g}^\text{T}=\rho)\\
      &=& 
    \mathbb{E}_{P_{t,g}}(n_{t,g}^{0,\tau})(1-\xi_t F_{t,g}^\text{T}(\tau))\,,
\end{array}
$$
having denoted the improper CDF of $\tau_{t,g}^\text{T}$ by $F_{t,g}^\text{T}$, as usual.

Finally, to average over all $\Omega_t$, we simply notice that
$$
    \mathbb{E}_{P_t}(n_t^\tau) =
    \sum_g p_{t,g}\,
    \mathbb{E}_{P_{t,g}}(n_{t,g}^\tau)\,,
$$
where $p_{t,g}:=P_t(G_t=g)$.\footnote{Note that, for simplicity, in the rest of the paper we took the distribution of $G$ independent of absolute time, and wrote $p_g=p_{t,g}$.} On the other hand, summing over $\tau$ gives a relation between $B_t$ and $B_{t,g}^0 = \mathbb{E}_{P_{t,g}}(N^{0,\tau}_{t,g})$:
$$
\begin{array}{rcl}
    B_t(\tau)
     &=& 
     \sum_g p_{t,g}\sum_{\tau'\leq\tau}\mathbb{E}_{P_{t,g}}(n_{t,g}^\tau)\\
      &=& 
     \sum_g p_{t,g}\int_{(0,\tau]}(1-\xi_t F_{t,g}^\text{T}(\tau'))\,\mathrm{d}B^0_{t,g}(\tau')\,.
\end{array}
$$

Then, we can sum up the content of this subsection as follows:

\begin{proposition}
Take $t\in[0,+\infty)$. Assuming the \emph{suppression hypothesis} \eqref{Eq: suppr hyp by sev} and the conditional independence of $\tau^\text{T}$ and $n_t^{0,\tau}$ with respect to $G$, we have
\begin{equation}\label{Eq: suppression of infectiousness}
\begin{array}{rcl}
    \mathbb{E}_{P_t}(n_t^\tau) &=& 
    \sum_g p_{t,g}\,\mathbb{E}_{P_{t,g}}(n_{t,g}^\tau)\\
    &=&
    \sum_g p_{t,g}\,\mathbb{E}_{P_{t,g}}(n_{t,g}^{0,\tau})(1-\xi_t F_{t,g}^\text{T}(\tau))\,,
\end{array}
\end{equation}
where $p_{t,g} := P_t(G_t=g)$. Moreover, 
$$
R_t = \sum_{g\in\mathbb{R}}p_{t,g}R_{t,g} = 
    \sum_g p_{t,g}\int_{\mathbb{R}^+}(1-\xi_t F_{t,g}^\text{T}(\tau))\,\mathrm{d}B^0_{t,g}(\tau)\,.
$$
\end{proposition}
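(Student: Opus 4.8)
The plan is to prove the fixed-severity identity first and then aggregate. Working on a single slice $\Omega_{t,g}$, the two hypotheses feed into each other cleanly: the suppression hypothesis \eqref{Eq: suppr hyp by sev} controls the conditional expectation $\mathbb{E}_{P_{t,g}}(n_{t,g}^\tau \mid \tau_{t,g}^\text{T}=\rho)$ in terms of the \emph{default} conditional expectation $\mathbb{E}_{P_{t,g}}(n_{t,g}^{0,\tau} \mid \tau_{t,g}^\text{T}=\rho)$, while the conditional independence of $\tau^\text{T}$ and $n_t^{0,\tau}$ given $G$ removes the conditioning from the latter. So my first step is to combine them to obtain \eqref{Eq: suppr hyp 2}, namely $\mathbb{E}_{P_{t,g}}(n_{t,g}^\tau \mid \tau_{t,g}^\text{T}=\rho) = (1-\xi_t\,\delta_{\tau\geq\rho})\,\mathbb{E}_{P_{t,g}}(n_{t,g}^{0,\tau})$, in which the factor depending on $\rho$ has been isolated from a quantity that no longer depends on $\rho$.

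Next I would integrate out $\rho$ by the law of total expectation. Since $\Omega$ is finite, $\tau_{t,g}^\text{T}$ takes finitely many values in $[0,+\infty]$ and the tower property is just a finite regrouping: $\mathbb{E}_{P_{t,g}}(n_{t,g}^\tau) = \sum_\rho P_{t,g}(\tau_{t,g}^\text{T}=\rho)\,\mathbb{E}_{P_{t,g}}(n_{t,g}^\tau \mid \tau_{t,g}^\text{T}=\rho)$. Substituting the expression above and pulling the $\rho$-independent factor $\mathbb{E}_{P_{t,g}}(n_{t,g}^{0,\tau})$ out of the sum leaves $\sum_\rho (1-\xi_t\,\delta_{\tau\geq\rho})\,P_{t,g}(\tau_{t,g}^\text{T}=\rho)$. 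The weights sum to the total finite mass and, because $\delta_{\tau\geq\rho}=\chi_{[\rho,+\infty]}(\tau)$ is exactly the indicator that $\rho\leq\tau$, the weighted sum of $\delta_{\tau\geq\rho}$ collapses to $P_{t,g}(\tau_{t,g}^\text{T}\leq\tau)=F_{t,g}^\text{T}(\tau)$; this yields the per-severity suppression factor $1-\xi_t F_{t,g}^\text{T}(\tau)$.

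The remaining two steps are bookkeeping. To pass from the slices $\Omega_{t,g}$ to all of $\Omega_t$, I would use the disintegration of $P_t$ along the severity partition, which gives $\mathbb{E}_{P_t}(n_t^\tau)=\sum_g p_{t,g}\,\mathbb{E}_{P_{t,g}}(n_{t,g}^\tau)$ with $p_{t,g}=P_t(G_t=g)$; this is just the tower property for the random variable $G$ and is where the chain of equalities in \eqref{Eq: suppression of infectiousness} closes. For the second assertion, I would sum the per-age identity over $\tau$ (equivalently take $\tau\to+\infty$ in the cumulative counts): since $R_{t,g}=\sum_{\tau}\mathbb{E}_{P_{t,g}}(n_{t,g}^\tau)$ and $B^0_{t,g}$ is by definition the improper CDF whose increments are the default per-age expectations $\mathbb{E}_{P_{t,g}}(n_{t,g}^{0,\tau})$, the sum becomes the Stieltjes integral $\int_{\mathbb{R}^+}(1-\xi_t F_{t,g}^\text{T}(\tau))\,\mathrm{d}B^0_{t,g}(\tau)$, and weighting by $p_{t,g}$ gives $R_t$.

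I do not expect a serious analytic obstacle, precisely because $\Omega$ is finite: every expectation is a finite average, every ``integral'' is a finite sum, and there are no convergence or measurability issues to check. The one point demanding care is conceptual rather than computational: making sure the conditional independence is invoked correctly, i.e.\ that it is independence of $n_{t,g}^{0,\tau}$ and $\tau_{t,g}^\text{T}$ \emph{after} restricting to fixed severity $g$, and that this is legitimately what lets me replace $\mathbb{E}_{P_{t,g}}(n_{t,g}^{0,\tau}\mid\tau_{t,g}^\text{T}=\rho)$ by the unconditional $\mathbb{E}_{P_{t,g}}(n_{t,g}^{0,\tau})$; applying independence at the aggregate level would be false, which is exactly the subtlety motivating the introduction of $G$ in \S\ref{Rt suppression model}. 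A secondary point is to treat the value $\rho=+\infty$ consistently in the sum over $\rho$, so that the identity $\sum_\rho\delta_{\tau\geq\rho}P_{t,g}(\tau_{t,g}^\text{T}=\rho)=F_{t,g}^\text{T}(\tau)$ correctly excludes the mass at $+\infty$ (the never-tested individuals), matching the definition of the improper CDF.
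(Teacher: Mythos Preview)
Your proposal is correct and follows essentially the same route as the paper: combine \eqref{Eq: suppr hyp by sev} with conditional independence to obtain \eqref{Eq: suppr hyp 2}, sum over $\rho\in[0,+\infty]$ via the tower property to isolate the factor $1-\xi_t F_{t,g}^\text{T}(\tau)$, then aggregate over $g$ and over $\tau$. Your remarks about handling $\rho=+\infty$ and about where the conditional independence must be applied are exactly the care points the paper's derivation relies on.
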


Note that taking the continuum limit of Eq.\ \eqref{Eq: suppression of infectiousness} we retrieve Eq.\ \eqref{Eq: beta_t,g corr}:
$$
\beta_t(\tau) = 
    \sum_g p_{t,g}\beta_{t,g}(\tau) = 
    \sum_g p_{t,g}\,(1-\xi_t F_{t,g}^\text{T}(\tau))\,\beta_{t,g}^0(\tau)\,.
$$

We conclude this subsection by noting, for future convenience, that we can rewrite the suppression hypothesis without referring to $n^{0,\tau}$:
\begin{equation}\label{Eq: suppr hyp 3}
    \mathbb{E}_{P_{t,g}}(n_{t,g}^\tau|\tau_{t,g}^\text{T}=\rho) = 
\frac{1-\xi_t\,\delta_{\tau\geq\rho}}{1-\xi_t F_{t,g}^\text{T}(\tau)}\,\mathbb{E}_{P_{t,g}}(n_{t,g}^\tau)\,.
\end{equation}


\subsection{Random variables technology}

Given a random variable $X:\Omega\to\mathbb{R}$, it is natural to consider the composition
$$\hat{X} := X\circ\sigma\,.$$
The main use case of this is when $X$ represents the time at which some event related to an individual happens. For example, the infectious age $\tau^\text{T}$ at which they get tested. In this case, $P_t(\hat{\tau}^\text{T}_t=\tau)$ is the probability that, given an individual infected at $t$, their infector is tested at \emph{the infector's} infectious age $\tau$.

In fact, we are often more interested in a slightly different distribution, namely that of the random variable $\check{X}$ defined by
$$
\check{X}(\omega) := 
X(\sigma(\omega)) - \tau^\sigma(\omega)\,.
$$

When $X=\tau^\text{T}$, then $P_t(\check{\tau}^\text{T}_t=\tau)$ is now the probability that, given an individual infected at $t$, their source is tested at \emph{the individual's} infectious age $\tau$.

The next Proposition relates the distributions of $\hat{X}_t$, $\check{X}_t$, and $X_t$:

\begin{proposition}\label{Prop: comp rnd var source}
Take $X:\Omega\to\mathbb{R}$. For all $x\in\mathbb{R}$, we have
$$
\begin{array}{c}
P_t(\hat{X}_t=x) = 
\sum_{\tau\in\mathbb{R}^+}
 P_{t-\tau}(X_{t-\tau}=x)\,
 \mathbb{E}_{P_{t-\tau}}(n_{t-\tau}^\tau | X_{t-\tau} = x)
\frac{\#\Omega_{t-\tau}}{\#\Omega_t}\,,\\
P_t(\check{X}_t=x) = \sum_{\tau\in\mathbb{R}^+}
P_{t-\tau}(X_{t-\tau}=x+\tau)\,
\mathbb{E}_{P_{t-\tau}}(n^\tau_{t-\tau} | X_{t-\tau} = x + \tau)
\frac{\#\Omega_{t-\tau}}{\#\Omega_t}\,.
\end{array}
$$
\end{proposition}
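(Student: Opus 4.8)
The plan is to prove both identities by a single counting argument, exploiting that each $P_t$ is the uniform measure on the finite set $\Omega_t$, so that every probability and conditional expectation appearing in the statement is literally a ratio of cardinalities. Since $\check{X}$ differs from $\hat{X}=X\circ\sigma$ only by the shift $\check{X}(\omega)=\hat{X}(\omega)-\tau^\sigma(\omega)$, the two formulas will come out of the same bookkeeping; I would carry out the computation for $\hat{X}_t$ in full and then indicate the single change needed for $\check{X}_t$.

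First I would write $P_t(\hat{X}_t=x)=\#\{\omega\in\Omega_t : X(\sigma(\omega))=x\}/\#\Omega_t$ and partition the numerator according to the infection time $t-\tau=t^\text{I}(\sigma(\omega))$ of the infector, equivalently according to the generation time $\tau=\tau^\sigma(\omega)$. The key structural fact is that $\sigma$ maps $\Omega_t$ into $\bigcup_\tau\Omega_{t-\tau}$ and that the fiber $\{\omega\in\Omega_t : \sigma(\omega)=\omega'\}$ over an infector $\omega'\in\Omega_{t-\tau}$ has cardinality exactly $n^\tau(\omega')$, by the definition of $n^\tau$ together with $\tau^\sigma(\omega)=t^\text{I}(\omega)-t^\text{I}(\omega')$. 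Summing over all infectors $\omega'$ with $X(\omega')=x$ yields
$$\#\{\omega\in\Omega_t : X(\sigma(\omega))=x\}=\sum_{\tau\in\mathbb{R}^+}\ \sum_{\substack{\omega'\in\Omega_{t-\tau}\\ X(\omega')=x}} n^\tau(\omega')\,.$$
Terms with $t-\tau<0$ carry $\Omega_{t-\tau}=\emptyset$ and drop out, so the sum over $\tau\in\mathbb{R}^+$ is in fact finite.

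Next I would recognize the inner sum as a conditional expectation times a count. Writing $A_\tau=\{\omega'\in\Omega_{t-\tau} : X(\omega')=x\}$, the inner sum equals $\#A_\tau\cdot\mathbb{E}_{P_{t-\tau}}(n^\tau_{t-\tau}\mid X_{t-\tau}=x)$, while $\#A_\tau=\#\Omega_{t-\tau}\cdot P_{t-\tau}(X_{t-\tau}=x)$. Substituting both, dividing through by $\#\Omega_t$, and pulling the factor $\#\Omega_{t-\tau}/\#\Omega_t$ out of each summand reproduces exactly the stated formula for $P_t(\hat{X}_t=x)$. For the second identity the only modification is that, for $\omega\in\Omega_t$ with $\tau^\sigma(\omega)=\tau$, the condition $\check{X}_t(\omega)=x$ is equivalent to $X(\sigma(\omega))=x+\tau$; hence the same partition produces an inner sum over infectors $\omega'\in\Omega_{t-\tau}$ with $X(\omega')=x+\tau$, and the identical manipulation gives the formula with $x$ replaced by $x+\tau$ in both the probability and the conditional expectation.

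The argument is essentially bookkeeping, so the only point requiring care — and the one I would flag as the main obstacle — is the degenerate case $A_\tau=\emptyset$, in which $\mathbb{E}_{P_{t-\tau}}(n^\tau_{t-\tau}\mid X_{t-\tau}=x)$ is undefined. I would circumvent this by working throughout with the product $P_{t-\tau}(X_{t-\tau}=x)\,\mathbb{E}_{P_{t-\tau}}(n^\tau_{t-\tau}\mid X_{t-\tau}=x)=\mathbb{E}_{P_{t-\tau}}\!\big(n^\tau_{t-\tau}\,\chi_{\{X_{t-\tau}=x\}}\big)$, which is always well defined and vanishes precisely when $A_\tau=\emptyset$, matching the convention that such terms contribute nothing to the sum. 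Because all sets are finite and only finitely many $\tau$ contribute, no limiting or measure-theoretic subtlety intervenes.
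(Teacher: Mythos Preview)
Your proposal is correct and follows essentially the same counting argument as the paper: partition by the generation time $\tau$, observe that the fiber of $\sigma$ over $\omega'\in\Omega_{t-\tau}$ has cardinality $n^\tau(\omega')$, and recognize the resulting inner sum as a conditional expectation times a count. The paper packages both identities at once via a parameter $\alpha\in\{0,1\}$ with $X^\alpha(\omega)=X(\sigma(\omega))-\alpha\,\tau^\sigma(\omega)$, whereas you treat $\hat X$ first and indicate the shift for $\check X$; your explicit handling of the degenerate case $A_\tau=\emptyset$ is a careful addition the paper leaves implicit.
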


\begin{proof}
We prove both formulas at once by defining 
$X^\alpha:\Omega\to\mathbb{R}$ as
$$
X^\alpha(\omega) := X(\sigma(\omega)) - \alpha\,\tau^\sigma(\omega)
$$
for $\alpha\in\{0,1\}$, so that $X^0=\hat{X}$ and $X^1=\check{X}$.
\begin{multline*}
\#\Omega_t\,P_t(X^\alpha_t=x) = \\
\begin{array}{rcl}
     &=& \#\{\omega\in\Omega_t\ |\ X(\sigma(\omega)) = x + \alpha\,\tau^\sigma(\omega)\}\\
     &=& \#\{(\omega',\omega)\in\tilde\Omega\ |\ \omega\in\Omega_t\,,\ X(\omega') = x + \alpha\,\tau^C(\omega',\omega)\} \\
     &=& \sum_{\tau\in\mathbb{R}^+}\#\{(\omega',\omega)\in\tilde\Omega\ |\ \omega'\in\Omega_{t-\tau}\,,\ X(\omega') = x + \alpha\,\tau\,,\ \omega\in\Omega_t\} \\
     &=& \sum_{\tau\in\mathbb{R}^+}\sum_{\omega'\in\Omega_{t-\tau}|X(\omega') = x + \alpha\,\tau}
    n^\tau_{t-\tau}(\omega') \\
    &=& \sum_{\tau\in\mathbb{R}^+}
    \mathbb{E}_{P_{t-\tau}}(n^\tau_{t-\tau} | X_{t-\tau} = x + \alpha\,\tau)
    \#\{\omega'\in\Omega_{t-\tau}\ |\ X(\omega') = x + \alpha\,\tau\} \\
&=& \sum_{\tau\in\mathbb{R}^+}\mathbb{E}_{P_{t-\tau}}(n^\tau_{t-\tau} | X_{t-\tau} = x + \alpha\,\tau)P_{t-\tau}(X_{t-\tau}=x+\alpha\,\tau)\#\Omega_{t-\tau}\,.
\end{array}
\end{multline*}

\end{proof}

Notice that we can also break down the right hand side of the formulae of Prop.\ \ref{Prop: comp rnd var source} by the values of $G$. In particular, the second equation can be rewritten as
\begin{equation}\label{Eq: Xcheck by g}
    P_t(\check{X}_t=x) = 
    \sum_{\tau\in\mathbb{R}^+}
    \sum_{g\in\mathbb{R}}
    P_{t-\tau,g}(X_{t-\tau,g}=x+\alpha\,\tau)\,
    \mathbb{E}_{P_{t-\tau,g}}(n^\tau_{t-\tau,g} | X_{t-\tau,g} = x + \alpha\,\tau)
    \frac{\#\Omega_{t-\tau,g}}{\#\Omega_t}\,.
\end{equation}

\subsection{Remarks on generation time and numbers of infected individuals}\label{A: generation time}

In this subsection, we study the distribution of the generation time. First, we do this considering it as a function $\tau^\text{C}:\tilde\Omega\to\mathbb{R}^+$ of infector-infectee pairs, and in particular taking its restriction $\tilde\tau_t^\text{C}$ to the set
$$
\tilde\Omega_t:=\{(\omega',\omega)\in\tilde\Omega\ |\ \omega'\in\Omega_t\}\,.
$$
As usual, on $\tilde\Omega_t$ we put the uniform probability measure, denoted by $\tilde{P}_t$. In this way, we find the intuitive fact that the distribution of $\tau^\text{C}$ restricted to $\tilde\Omega_t$ is just the normalization of the infectiousness:

\begin{proposition}
The CDF of the random variable $\tilde\tau_t^\text{C}:\tilde\Omega_t\to\mathbb{R}^+$ is given by
$$
\tilde{F}^\text{C}_t(\tau) = \frac{B_t(\tau)}{R_t}\,.
$$
\end{proposition}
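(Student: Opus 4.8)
The plan is to recognize this as a pure counting identity and to evaluate the numerator and denominator of the ratio defining the CDF separately. By definition of the uniform measure $\tilde P_t$ on $\tilde\Omega_t$,
$$
\tilde F^\text{C}_t(\tau) = \tilde P_t(\tilde\tau^\text{C}_t\leq\tau) = \frac{\#\{(\omega',\omega)\in\tilde\Omega_t\ |\ \tau^\text{C}(\omega',\omega)\leq\tau\}}{\#\tilde\Omega_t}\,,
$$
so it suffices to count these two cardinalities and show their ratio equals $B_t(\tau)/R_t$.

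For the numerator, I would partition the relevant pairs according to their infector $\omega'\in\Omega_t$. For a fixed such $\omega'$, a pair $(\omega',\omega)\in\tilde\Omega$ satisfies $\tau^\text{C}(\omega',\omega) = t^\text{I}(\omega) - t^\text{I}(\omega') = \tau^\sigma(\omega)$, so the condition $\tau^\text{C}\leq\tau$ selects exactly the infectees of $\omega'$ with $\tau^\sigma\leq\tau$, whose number is precisely $N^\tau(\omega')$ by definition. Summing over $\omega'\in\Omega_t$ and unwinding the definition of the expectation with respect to $P_t$ gives
$$
\#\{(\omega',\omega)\in\tilde\Omega_t\ |\ \tau^\text{C}(\omega',\omega)\leq\tau\} = \sum_{\omega'\in\Omega_t}N^\tau(\omega') = \#\Omega_t\,\mathbb{E}_{P_t}(N^\tau_t) = \#\Omega_t\,B_t(\tau)\,.
$$

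For the denominator, the same bookkeeping with the generation-time constraint dropped (formally $\tau\to+\infty$) yields $\#\tilde\Omega_t = \sum_{\omega'\in\Omega_t}N^\infty(\omega') = \#\Omega_t\,R_t$; equivalently, the graph bijection $\Omega_{>0}\to\tilde\Omega$ identifies each pair with its infectee, so $\#\tilde\Omega_t$ is just the total number of individuals infected by members of $\Omega_t$. Taking the ratio, the common factor $\#\Omega_t$ cancels and leaves $B_t(\tau)/R_t$, as claimed.

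I do not expect a genuine obstacle here: the statement is a direct consequence of the definitions of $N^\tau$, $B_t$, and $R_t$. The only point requiring care is the identification $\tau^\text{C}(\omega',\omega) = \tau^\sigma(\omega)$ together with the observation that summing the per-infector counts $N^\tau(\omega')$ over $\Omega_t$ reproduces $\#\Omega_t$ times the average $B_t(\tau)$. Once this is set up, both numerator and denominator are manifestly proportional to $\#\Omega_t$, and the proportionality factor is exactly the desired ratio.
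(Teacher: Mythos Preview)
Your proposal is correct and follows essentially the same approach as the paper: partition the pairs in $\tilde\Omega_t$ by their infector $\omega'\in\Omega_t$, recognize that the count per infector is $N^\tau(\omega')$, sum to obtain $\#\Omega_t\,B_t(\tau)$, and identify $\#\tilde\Omega_t=\#\Omega_t\,R_t$ by letting $\tau\to+\infty$. The only cosmetic difference is that the paper carries the denominator $\#\tilde\Omega_t$ along and takes the limit at the end, whereas you compute numerator and denominator separately; the substance is identical.
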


\begin{proof}
$$\begin{array}{rcl}
    \tilde{F}^\text{C}_t(\tau)
    &=& \tilde{P}_t(\tilde\tau_t^\text{C}\leq\tau) \\
    &=&\frac{1}{\#\tilde\Omega_t}\#\{(\omega',\omega)\in\tilde\Omega_t\ |\ \tau^\text{C}(\omega',\omega)\leq\tau\} \\
    &=&\frac{1}{\#\tilde\Omega_t}
    \sum_{\omega'\in\Omega_t}N^\tau_t(\omega') \\
    &=&\frac{\#\Omega_t}{\#\tilde\Omega_t}\,\mathbb{E}_{P_t}(N^\tau_t)\\
    &=&\frac{\#\Omega_t}{\#\tilde\Omega_t}\,B_t(\tau)\,.
\end{array}$$
The limit $\tau\to+\infty$ gives
$$
\#\tilde\Omega_t = \#\Omega_t\,\mathbb{E}_{P_t}(N^\infty_t) = 
\#\Omega_t\,R_t\,,
$$
and the claim immediately follows.
\end{proof}

Next, we focus on the generation time as a function $\tau^\sigma_t:\Omega_t\to\mathbb{R}^+$ of the infectee. Its probability distribution is given by the formula
$$
P_t(\tau^\sigma_t=\tau) = \frac{\#\Omega_{t-\tau}}{\#\Omega_t}\mathbb{E}_{P_{t-\tau}}(n^\tau_{t-\tau})\,,
$$
which follows from the definitions. Summing the left-hand side over all $\tau>0$ gives 1, from which we find how to compute $\#\Omega_t$ in terms of quantities relative to previous times:
$$
\#\Omega_t = \sum_{\tau\in\mathbb{R}^+}\#\Omega_{t-\tau}\mathbb{E}_{P_{t-\tau}}(n^\tau_{t-\tau})\,.
$$
The last two formulae are easily proven, and the first is an immediate consequence of the next Proposition, which describes the joint probability distribution of $\tau^\sigma_t$ and $\hat{G}_t=(G\circ\sigma)_t$:

\begin{proposition}
$$
P_t(\tau^\sigma_t=\tau, \hat{G}_t=g) =\mathbb{E}_{P_{t-\tau,g}}(n_{t-\tau,g}^\tau)
    \frac{\#\Omega_{t-\tau,g}}{\#\Omega_t}\,.
$$
\end{proposition}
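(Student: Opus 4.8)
The plan is to unwind the left-hand side into a pure counting problem and then reorganize the count by the infector, exactly in the spirit of the proof of Proposition~\ref{Prop: comp rnd var source}, of which this is a close variant. Since $P_t$ is uniform on $\Omega_t$, I would first write
$$
\#\Omega_t\,P_t(\tau^\sigma_t=\tau,\hat{G}_t=g)
= \#\{\omega\in\Omega_t\ |\ \tau^\sigma(\omega)=\tau,\ G(\sigma(\omega))=g\}\,.
$$
The key translation step — which I expect to be the only real subtlety — is to read off what the two conditions say about the infector $\omega'=\sigma(\omega)$: the constraint $\omega\in\Omega_t$ together with $\tau^\sigma(\omega)=t^\text{I}(\omega)-t^\text{I}(\omega')=\tau$ forces $t^\text{I}(\omega')=t-\tau$, while $G(\sigma(\omega))=g$ forces $G(\omega')=g$. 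Together these say precisely that $\omega'\in\Omega_{t-\tau,g}$.

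Next I would pass to the graph $\tilde\Omega$ of $\sigma$, using the bijection $\Omega_{>0}\to\tilde\Omega$ given by $\omega\mapsto(\sigma(\omega),\omega)$, to rewrite the count over infectees as a count over infector-infectee pairs:
$$
\#\{(\omega',\omega)\in\tilde\Omega\ |\ \omega'\in\Omega_{t-\tau,g},\ \tau^\sigma(\omega)=\tau\}
= \sum_{\omega'\in\Omega_{t-\tau,g}} n^\tau(\omega')\,,
$$
where the last equality is simply the definition of $n^\tau(\omega')$ as the number of individuals infected by $\omega'$ at $\omega'$'s infectious age $\tau$. Recognizing the resulting sum as an unnormalized average over $\Omega_{t-\tau,g}$ gives
$$
\sum_{\omega'\in\Omega_{t-\tau,g}} n^\tau_{t-\tau,g}(\omega')
= \#\Omega_{t-\tau,g}\,\mathbb{E}_{P_{t-\tau,g}}(n^\tau_{t-\tau,g})\,,
$$
and dividing through by $\#\Omega_t$ yields the claimed formula.

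The argument is essentially bookkeeping, and the only place demanding care is the first translation: one must keep straight that $\tau^\sigma$ is measured from the \emph{infectee's} infection time, so that fixing $\omega\in\Omega_t$ and $\tau^\sigma(\omega)=\tau$ locates the infector at absolute time $t-\tau$ (and not $t+\tau$), and that $\hat{G}=G\circ\sigma$ records the \emph{infector's} severity rather than the infectee's. Once these are pinned down, the remaining manipulations coincide with the source-counting identity already established in Proposition~\ref{Prop: comp rnd var source}; indeed, an alternative route would be to deduce this statement directly from that Proposition by refining its partition of $\tilde\Omega$ according to the value of $G(\omega')$, which is why I regard the present claim as a routine specialization rather than a genuinely new computation.
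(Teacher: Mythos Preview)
Your proposal is correct and follows essentially the same approach as the paper: unwind the probability via the uniform measure into a cardinality, pass to infector--infectee pairs in $\tilde\Omega$, group by the infector $\omega'\in\Omega_{t-\tau,g}$, and recognize the resulting sum $\sum_{\omega'\in\Omega_{t-\tau,g}} n^\tau(\omega')$ as $\#\Omega_{t-\tau,g}\,\mathbb{E}_{P_{t-\tau,g}}(n^\tau_{t-\tau,g})$. The paper's proof is exactly this four-line chain, with your commentary on the translation step made implicit.
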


This is the probability that the infector of someone infected at $t$ was infected at $t-\tau$ and had severity $g$.

\begin{proof}
$$\begin{array}{rcl}
\#\Omega_t\,P_t(\tau^\sigma_t=\tau, \hat{G}_t=g)    
    &=& 
    \#\{\omega\in\Omega_t\ |\ \tau^\sigma(\omega)=\tau\,,\ \hat{G}(\omega)=g\}\\
    &=& 
    \#\{(\omega',\omega)\in\tilde\Omega\ |\ \omega\in\Omega_t,\ \omega'\in\Omega_{t-\tau},\ G(\omega')=g)\}\\
    &=& 
    \sum_{\omega'\in\Omega_{t-\tau, g}}n^\tau(\omega')\\
    &=& 
    \#\Omega_{t-\tau, g}\,\mathbb{E}_{P_{t-\tau,g}}(n^\tau_{t-\tau,g})\,.
\end{array}$$
\end{proof}

This formula for the joint probability measure will be used in the next subsection, where we will often use it in integrals. Given that in this paper we always consider $G$ to have a given discrete range, while $\tau^\sigma$ becomes continuous in the continuum limit, we will preferably write these integrals with respect to the improper CDFs
\begin{equation}\label{Eq: joint CDF tausigma}
F^{\sigma,g}_t(\tau) := 
P_t(\tau^\sigma_t\leq\tau, \hat{G}_t=g) = 
\sum_{\tau'\leq\tau}\,\mathbb{E}_{P_{t-\tau',g}}(n_{t-\tau',g}^{\tau'})
\frac{\#\Omega_{t-\tau',g}}{\#\Omega_t}\,.
\end{equation}



\subsection{Time evolution}\label{A: time evolution}

As we saw in \S\ref{tauSt, tauAt, and tauTt}, the key step to determining the time evolution of the system is writing the distribution of the notification time $\tau_t^\text{A,c}$ in terms of that of the testing time $\tau_{t'}^\text{T}$, for $t'<t$. Our assumption is that any infected individual $\omega\in\Omega_t$ is notified precisely at the testing time of their infector $\sigma(\omega)$ with a certain probability $s^\text{c}_t$, provided that such testing time follows the infection time of $\omega$ (otherwise, $\omega$ is never notified). Referring these instants to the infectee's infectious age, we get that $\tau^\text{A,c}(\omega)$ is equal to
$$
\check{\tau}^\text{T}(\omega) = \tau^\text{T}(\sigma(\omega)) - \tau^\sigma(\omega)
$$
with probability $s^\text{c}_t$ in case it is a positive number, and $\tau^\text{A,c}(\omega)=+\infty$ in the remaining cases. This can be written synthetically as
\begin{equation}\label{Eq: notif hyp}
    F_t^\text{A,c}(\tau) = 
s^\text{c}_t\,P_t(\check{\tau}_t^\text{T}\in\mathbb{R}^+) = 
s^\text{c}_t\,(\check{F}_t^\text{T}(\tau) - \check{F}_t^\text{T}(0))\,,
\end{equation}
where $F_t^\text{A,c}$ and $\check{F}_t^\text{T}$ are the improper CDFs of $\tau_t^\text{A,c}$ and $\check{\tau}_t^\text{T}$, respectively.

Thus, our goal reduces to computing the distribution of $\check{\tau}^\text{T}$, the (possibly negative) time elapsed from the infectee's contagion to the infector's testing. Applying Eq.\ \eqref{Eq: Xcheck by g} to $X=\tau^\text{T}$ and using the suppression formula \eqref{Eq: suppr hyp 3} we get, for any $\rho\in(-\infty,+\infty]$,
\begin{multline*}
P_t(\check{\tau}_t^\text{T}=\rho) =\\
\begin{array}{rcl}
    &=& 
    \sum_{\tau\in\mathbb{R}^+}\sum_{g\in\mathbb{R}}
    P_{t-\tau,g}(\tau^\text{T}_{t-\tau,g}=\rho+\tau)
    \mathbb{E}_{P_{t-\tau,g}}(n^\tau_{t-\tau,g} | \tau^\text{T} = \rho + \tau)
    \frac{\#\Omega_{t-\tau,g}}{\#\Omega_t}\\
    &=& 
    \sum_{g\in\mathbb{R}}\sum_{\tau\in\mathbb{R}^+}
    P_{t-\tau,g}(\tau^\text{T}_{t-\tau,g}=\rho+\tau)
    \frac{1-\xi_{t-\tau}\,\delta_{\tau\geq\rho+\tau}}
    {1-\xi_{t-\tau} F_{t-\tau,g}^\text{T}(\tau)}\,
    \mathbb{E}_{P_{t-\tau,g}}(n_{t-\tau,g}^\tau)
    \frac{\#\Omega_{t-\tau,g}}{\#\Omega_t}\\
    &=& 
    \sum_{g\in\mathbb{R}}\int_{\mathbb{R}^+}
    P_{t-\tau,g}(\tau^\text{T}_{t-\tau,g}=\rho+\tau)
    \frac{1-\xi_{t-\tau}\,\delta_{\rho\leq0}}
    {1-\xi_{t-\tau} F_{t-\tau,g}^\text{T}(\tau)}\,
    \mathrm{d}F^{\sigma,g}_t(\tau)\,.
    %
\end{array}
\end{multline*}
Notice that in the last line we used the improper CDFs $F^{\sigma,g}_t$ introduced in Eq.\ \eqref{Eq: joint CDF tausigma}.

Let us try to interpret this formula. The probability distribution of $\check{\tau}^\text{T}$ is obtained by averaging the distributions of $\tau^\text{T}_{t',g}$ for all $t'=t-\tau<t$ and all $g$, each shifted by $\tau$ to the left to account for the switch from the infector's to the infectee's infectious age. This averaging is done by integrating over all $t'$ and $g$ with respect to the joint distribution of the generation time $\tau^\sigma$ and the infector's severity $\hat{G}$. But a correction factor (the fraction) appears in the integral, as the fact that the infector infects at relative time $\tau$ and has severity $g$ conditions the distribution of $\tau^\text{T}_{t-\tau,g}$, by shifting it toward values greater than $\tau$. Indeed, the correction factor is greater than 1 for $\rho>0$ and less than 1 otherwise. This means that, compared to a hypothetical case in which the testing time and the infectiousness are independent (which happens when $\xi$ is constantly zero), the probability $P_t(\check{\tau}_t^\text{T}=\rho)$ is higher after the contagion time (i.e.,\ when $\rho>0$) and lower before.

It follows now that the improper CDF of $\check{\tau}_t^\text{T}$ reads
$$
\check{F}^\text{T}_t(\rho) =
\left\{\begin{matrix}
\sum_{g\in\mathbb{R}}\int_{\mathbb{R}^+}
    F^\text{T}_{t-\tau,g}(\rho + \tau)
    \frac{1-\xi_{t-\tau}}
    {1-\xi_{t-\tau} F_{t-\tau,g}^\text{T}(\tau)}\,
    \mathrm{d}F^{\sigma,g}_t(\tau) & \text{ for }\rho\leq 0\,,\\ 
\check{F}^\text{T}_t(0) + \sum_{g\in\mathbb{R}}\int_{\mathbb{R}^+}
    \frac{F^\text{T}_{t-\tau,g}(\rho + \tau) - F^\text{T}_{t-\tau,g}(\tau)}
    {1-\xi_{t-\tau} F_{t-\tau,g}^\text{T}(\tau)}\,
    \mathrm{d}F^{\sigma,g}_t(\tau) & \text{ for }0<\rho<+\infty\,.\\ 
\end{matrix}\right.
$$
We only have to replace this equation in \eqref{Eq: notif hyp} to get the time evolution formula:

\begin{proposition}
Assuming the \emph{suppression hypothesis} \eqref{Eq: suppr hyp by sev}, the conditional independence of $\tau^\text{T}$ and $n_t^{0,\tau}$ with respect to $G$, and the notification hypothesis \eqref{Eq: notif hyp}, we have
$$
F_t^\text{A,c}(\rho) = 
s^\text{c}_t\,\sum_{g\in\mathbb{R}}\int_{\mathbb{R}^+}
    \frac{F^\text{T}_{t-\tau,g}(\rho + \tau) - F^\text{T}_{t-\tau,g}(\tau)}
    {1-\xi_{t-\tau} F_{t-\tau,g}^\text{T}(\tau)}\,
    \mathrm{d}F^{\sigma,g}_t(\tau)
$$
for $\rho>0$ and $F_t^\text{A,c}(\rho) = 0$ otherwise.
\end{proposition}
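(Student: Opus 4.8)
The plan is to assemble the stated formula from three ingredients already in place: the notification hypothesis \eqref{Eq: notif hyp}, which expresses $F_t^\text{A,c}$ in terms of the improper CDF $\check{F}_t^\text{T}$ of $\check{\tau}_t^\text{T}$; the pull-back formula \eqref{Eq: Xcheck by g} of Prop.\ \ref{Prop: comp rnd var source}, which computes the distribution of $\check{\tau}_t^\text{T}$ by averaging the testing-time distributions of earlier slices along the infector map $\sigma$; and the suppression hypothesis in the form \eqref{Eq: suppr hyp 3}, which supplies the correction factor that accounts for conditioning on the infector having infected at infectious age $\tau$. The whole argument reduces to computing $\check{F}_t^\text{T}$ and substituting it into \eqref{Eq: notif hyp}.

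First I would apply \eqref{Eq: Xcheck by g} with $X=\tau^\text{T}$ (so $\alpha=1$) to write $P_t(\check{\tau}_t^\text{T}=\rho)$ as a double sum over $\tau\in\mathbb{R}^+$ and $g$ whose summand is $P_{t-\tau,g}(\tau^\text{T}_{t-\tau,g}=\rho+\tau)\,\mathbb{E}_{P_{t-\tau,g}}(n^\tau_{t-\tau,g}\mid\tau^\text{T}=\rho+\tau)\,\tfrac{\#\Omega_{t-\tau,g}}{\#\Omega_t}$. Into the conditional expectation I would substitute \eqref{Eq: suppr hyp 3}, read at time $t-\tau$, infectious age $\tau$, and testing value $\rho+\tau$, which produces the factor $\tfrac{1-\xi_{t-\tau}\,\delta_{\tau\geq\rho+\tau}}{1-\xi_{t-\tau}F^\text{T}_{t-\tau,g}(\tau)}$ together with the unconditioned mean $\mathbb{E}_{P_{t-\tau,g}}(n^\tau_{t-\tau,g})$. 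The key simplification is the collapse $\delta_{\tau\geq\rho+\tau}=\delta_{\rho\leq 0}$, so the indicator no longer interacts with the summation variable $\tau$. The leftover weights $\mathbb{E}_{P_{t-\tau,g}}(n^\tau_{t-\tau,g})\,\tfrac{\#\Omega_{t-\tau,g}}{\#\Omega_t}$ are, by \eqref{Eq: joint CDF tausigma}, exactly the increments of $F^{\sigma,g}_t$, turning the sum over $\tau$ into $\int_{\mathbb{R}^+}(\cdots)\,\mathrm{d}F^{\sigma,g}_t(\tau)$.

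Next I would integrate this point-mass formula up to $\rho$ to obtain $\check{F}_t^\text{T}$, keeping the two regimes separate since the correction factor equals $1-\xi_{t-\tau}$ on $\{\rho\leq 0\}$ and $1$ on $\{\rho>0\}$. For $\rho>0$ this amounts to splitting $\check{F}_t^\text{T}(\rho)=\check{F}_t^\text{T}(0)+\big[\text{contribution of }0<\rho'\leq\rho\big]$, and in the bracket summing $P_{t-\tau,g}(\tau^\text{T}_{t-\tau,g}=\rho'+\tau)$ over $0<\rho'\leq\rho$ telescopes to $F^\text{T}_{t-\tau,g}(\rho+\tau)-F^\text{T}_{t-\tau,g}(\tau)$, because as $\rho'$ runs over $(0,\rho]$ the argument $\rho'+\tau$ runs over $(\tau,\rho+\tau]$. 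This is where the characteristic numerator of the final formula emerges.

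Finally I would feed $\check{F}_t^\text{T}$ back into \eqref{Eq: notif hyp}. For $\rho>0$ the difference $\check{F}_t^\text{T}(\rho)-\check{F}_t^\text{T}(0)$ discards the $\check{F}_t^\text{T}(0)$ term (and hence all $\rho'\leq 0$ contributions together with their $1-\xi_{t-\tau}$ factor), leaving precisely $s^\text{c}_t\sum_g\int_{\mathbb{R}^+}\tfrac{F^\text{T}_{t-\tau,g}(\rho+\tau)-F^\text{T}_{t-\tau,g}(\tau)}{1-\xi_{t-\tau}F^\text{T}_{t-\tau,g}(\tau)}\,\mathrm{d}F^{\sigma,g}_t(\tau)$, as claimed; for $\rho\leq 0$ the hypothesis \eqref{Eq: notif hyp} gives $F_t^\text{A,c}(\rho)=0$ directly, reflecting that a contact notification cannot precede the contagion. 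I expect the main obstacle to be the bookkeeping around the correction factor: one must carry the indicator $\delta_{\tau\geq\rho+\tau}$ correctly through the conditioning, recognize its collapse to $\delta_{\rho\leq 0}$, and keep the regimes $\rho\leq 0$ and $\rho>0$ apart so that the telescoping in the $\rho>0$ regime lands on the correct pair of CDF values. The remaining steps are routine reindexing and passage to the integral notation.
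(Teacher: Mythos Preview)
Your proposal is correct and follows essentially the same route as the paper: apply \eqref{Eq: Xcheck by g} with $X=\tau^\text{T}$, replace the conditional expectation via \eqref{Eq: suppr hyp 3}, collapse $\delta_{\tau\geq\rho+\tau}$ to $\delta_{\rho\leq0}$, recognize the remaining weights as $\mathrm{d}F^{\sigma,g}_t$, sum over $\rho'$ in the two regimes to obtain $\check{F}_t^\text{T}$, and substitute into \eqref{Eq: notif hyp}. The bookkeeping you flag around the correction factor and the telescoping is exactly the content of the paper's derivation.
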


\subsection{Modifications in the case of use of a contact tracing app}\label{A: scenario with app}

The inhomogeneity in the population due to the use of a contact tracing app by a part of it can be partly addressed in an analogous way to the inhomogeneity due to different degrees of severity of the illness. Namely, we introduce a new random variable
$$
A:\Omega\to\{\text{app},\text{no app}\}
$$
whose value determines whether or not an individual $\omega\in\Omega$ has the app. We assume that whether or not an individual has the app is independent of both their severity and their infectiousness \emph{in the absence} of measures. In other words, $A$ is independent of $G$ and $n^{0,\tau}$, for all $\tau$. On the other hand, the infectiousness (in presence of measures) and the testing time of an individual will be different depending on whether or not they use the app. 

$A$ further partitions $\Omega$: we write
$$
\Omega_{t,g,a}:=\{\omega\in\Omega_t\ |\ G(\omega)=g\,,\ A(\omega)=a\}\,,\quad
n_{t,g,a}^\tau := n^\tau{\restriction_{\Omega_{t,g,a}}}\,,
$$
and so on. The content of \S\ref{A: suppression formula} fully applies to this scenario, but we want now to have formulae conditioned on $A$. As $A$ and $n^{0,\tau}$ are independent, the previous formulae simply become
$$
\mathbb{E}_{P_{t,g,a}}(n_{t,g,a}^\tau) =
    \mathbb{E}_{P_{t,g}}(n_{t,g}^{0,\tau})(1-\xi_t F_{t,g,a}^\text{T}(\tau))
$$
and
\begin{equation}\label{Eq: suppr formula with app}
\begin{array}{rcl}
    \mathbb{E}_{P_{t,g,a}}(n_{t,g,a}^\tau|\tau_{t,g,a}^\text{T}=\rho) &=& 
(1-\xi_t\,\delta_{\tau\geq\rho})\,\mathbb{E}_{P_{t,g}}(n_{t,g}^{0,\tau}) \\
    &=&
    \frac{1-\xi_t\,\delta_{\tau\geq\rho}}{1-\xi_t F_{t,g,a}^\text{T}(\tau)} \mathbb{E}_{P_{t,g,a}}(n_{t,g,a}^\tau)\,.
\end{array}    
\end{equation}
The suppression formula for $R_t$ can be broken down to
$$
\begin{array}{rcl}
    R_t
    &=& 
    \sum_{g\in\mathbb{R}}
    \sum_a p_{t,g}\epsilon_{t,a} R_{t,g,a}\\
    &=& 
    \sum_g\sum_a p_{t,g}\epsilon_{t,a} \int_{\mathbb{R}^+}
    (1-\xi_t\, F_{t,g,a}^\text{T}(\tau))\,
    \mathrm{d}B^0_{t,g}(\tau)\,,
\end{array}
$$
where $\epsilon_{t,a}:=P_t(A_t=a)$.

The time evolution equation has to be treated differently, as the receipt of the notification depends on whether both the infector and the infectee use the app.

Let $\check{F}_t^{\text{T},a}(\tau)$ denote the probability that, given an individual $\omega\in\Omega_t$, the infector $\sigma(\omega)$ is tested at a time $\leq t+\tau$ and we have $\hat{A}(\omega)=A(\sigma(\omega))=a$.

According to our assumptions, given an infection occurred at $t$, the probability that the infector notifies the infectee when they test positive (provided that this happens after the infection) is $s^\text{c,app}_t$ in the case that both individuals have the app, and is $s^\text{c,no app}_t$ otherwise. Therefore, the contact tracing hypothesis \eqref{Eq: notif hyp} is now replaced by the following expressions for the CDFs of the time of the notification received by an individual with or without the app, respectively:
\begin{equation}\label{Eq: contact tracing formula with app}
\begin{array}{rcl}
    F^\text{A,c}_{t,\text{app}}(\rho) &=& s^\text{c,app}_t
    (\check{F}_t^\text{T,app}(\rho) - \check{F}_t^\text{T,app}(0))
    + s^\text{c,no app}_t
    (\check{F}_t^\text{T,no app}(\rho) - \check{F}_t^\text{T,no app}(0))\,,  \\
    F^\text{A,c}_{t,\text{no app}}(\rho) &=& s^\text{c,no app}_t
    (\check{F}_t^\text{T,app}(\rho) -
    \check{F}_t^\text{T,app}(0)
    + \check{F}_t^\text{T,no app}(\rho) -
    \check{F}_t^\text{T,no app}(0)
    ) \\
    &=& s^\text{c,no app}_t
    (\check{F}_t^\text{T}(\rho) -
    \check{F}_t^\text{T}(0)
    )\,.
\end{array}
\end{equation}

Now, the improper CDFs $\check{F}_t^{\text{T},a}$ can be computed just as before, simply treating the conditioning on $\hat{A}$ as we treated the conditioning on $\hat{G}$:
\begin{equation}\label{Eq: joint prob checktauT, G, A}
\begin{array}{l}
    P_t(\check{\tau}_t^\text{T}=\rho, \hat{G}_t=g, \hat{A}_t=a)\\
    \quad  = 
\sum_{\tau>0}
P_{t-\tau,g,a}(\tau^\text{T}_{t-\tau,g,a}=\rho+\tau)\,
\mathbb{E}_{P_{t-\tau,g,a}}(n^\tau_{t-\tau,g,a} | \tau^\text{T}=\rho+\tau)\frac{\#\Omega_{t-\tau,g,a}}{\#\Omega_t}\\
    \quad  = 
    \int_{\mathbb{R}^+} 
    P_{t-\tau,g,a}(\tau^\text{T}_{t-\tau,g,a}=\rho+\tau)\,
\mathbb{E}_{P_{t-\tau,g,a}}(n^\tau_{t-\tau,g,a} | \tau^\text{T}=\rho+\tau)\,
\mathrm{d}F^{\sigma,g,a}_t(\tau)
\,,
\end{array}
\end{equation}
where we defined $F^{\sigma,g,a}_t$ as follows, proceeding like in \S\ref{A: generation time} to compute the joint distribution of $\tau^\sigma_t$, $\hat{G}_t$, and $\hat{A}_t$:
\begin{equation}\label{Eq: joint CDF tausigma G, A}
F^{\sigma,g,a}_t(\tau) := 
P_t(\tau^\sigma_t\leq\tau, \hat{G}_t=g, \hat{A}_t=a) = 
\sum_{\tau'\leq\tau}\,
\mathbb{E}_{P_{t-\tau',g,a}}(n_{t-\tau',g,a}^{\tau'})
\frac{\#\Omega_{t-\tau',g,a}}{\#\Omega_t}\,.
\end{equation}
It is worth noting that comparing this equation with \eqref{Eq: joint CDF tausigma} we get
\begin{equation}\label{Eq: Fsigma a in terms of Fsigma}
\frac{\mathrm{d}F^{\sigma,g,a}_t(\tau)}
{1-\xi_{t-\tau} F_{t-\tau,g,a}^\text{T}}
=
\epsilon_{t,a}\frac{\mathrm{d}F^{\sigma,g}_t(\tau)}
{1-\xi_{t-\tau} F_{t-\tau,g}^\text{T}}\,.
\end{equation}

Replacing the suppression formula \eqref{Eq: suppr formula with app} in \eqref{Eq: joint prob checktauT, G, A} and summing over $\rho$, we end up with
$$
\check{F}_t^\text{T,a}(\rho) - \check{F}_t^\text{T,a}(0) = 
\sum_g\int_{\mathbb{R}^+}
    \frac{F^\text{T}_{t-\tau,g,a}(\rho + \tau) - F^\text{T}_{t-\tau,g,a}(\tau)}
    {1-\xi_{t-\tau} F_{t-\tau,g,a}^\text{T}(\tau)}\,
    \mathrm{d}F^{\sigma,g,a}_t(\tau)
$$
for $\rho>0$. Plugging this into \eqref{Eq: contact tracing formula with app} gives us $F^{\text{A,c}}_{t,a}$ in terms of $F^{\sigma,g,a}_t$ and $F^\text{T}_{t',g,a}$ for $t'<t$, that is the time evolution equation for the scenario with app usage. For $a=\text{app}$, this is Eq.\ \eqref{Eq: time evolution equation with app}, while for $a=\text{no app}$ it simplifies to Eq.\ \eqref{Eq: time evolution equation without app}, since when the infectee doesn't have the app it is irrelevant whether or not the infector has the app. This is evident from the last line of Eq.\ \eqref{Eq: contact tracing formula with app}, which in fact could also have been used, together with the expression for $\check{F}_t^\text{T}$ derived in \S\ref{A: time evolution}, to get Eq.\ \eqref{Eq: time evolution equation without app}. Using Eq.\ \eqref{Eq: Fsigma a in terms of Fsigma}, it can be checked immediately that the two approaches give the same result.

\begin{acknowledgements}
This work was supported by the authors’ employer, Bending Spoons S.p.A, which was involved in the development of the contact tracing app adopted by the Italian Government. We thank our colleagues Christy Keenan and Luca Ferrari for carefully proofing the manuscript and providing significant stylistic improvements. We are also grateful to Giorgio Guzzetta and the anonymous referees for several useful comments and suggestions about this study and its publication.
\end{acknowledgements}

\begin{contributions}
M.M.\ and M.R.\ conceived the study and its main ideas; A.M.\ developed the mathematical model; A.M.\ and M.M.\ developed the Python repository with the calculations and wrote the manuscript.
\end{contributions}


\bibliographystyle{alpha}

\bibliography{Bib-VarPap}

\end{document}